\begin{document}

\title{Diversity-Multiplexing  Tradeoff  in the Multiaccess Relay Channel with Finite Block Length}

\author{
    \IEEEauthorblockN{Chung-Pi Lee and Hsuan-Jung Su}\\
    \IEEEauthorblockA{Graduate Institute of Communication Engineering\\
    Department of Electrical Engineering\\
        National Taiwan University, Taipei, Taiwan\\
        Email: D96942016@ntu.edu.tw, hjsu@cc.ee.ntu.edu.tw}
}
\newtheorem{thm}{Theorem}
\maketitle

\begin{abstract}
The Dynamic Decode-and-Forward (DDF) protocol and the Hybrid DDF and
Amplified-and-Forward (HDAF) protocol for the multiple-access relay
channel (MARC) with quasi static fading are evaluated using the Zheng-Tse
diversity-multiplexing tradeoff (DMT). We assume that there are two users, one
half-duplex relay, and a common destination, each equipped with
single antenna. For the Rayleigh fading, the DDF protocol is well
known and has been analyzed in terms of the DMT with
\emph{infinite} block length. By carefully dealing with properties specific to \emph{finite} block length,   we characterize the
finite block length DMT which takes into account the fact that the
event of decoding error at the relay causes the degradation in error
performance when the block length is finite. Furthermore,
we consider the situation where the destination does
not
have a priori knowledge of  the relay decision time at
which the relay switches from listening to transmitting. By
introducing a decision rejection criterion
such that
the relay forwards message only when its decision is reliable, and
the generalized likelihood ratio test (GLRT) rule at the destination
that jointly decodes the relay decision time and the information
message, our analysis show that the optimal DMT is achievable as if
there is no decoding error at the relay and the relay decision time is
known at the destination. Therefore, infinite block length and
additional
overhead for communicating the decision time are
\emph{not} needed for the DDF to achieve  the optimal DMT. To
further improve the DMT, we propose the HDAF protocol which take
advantages of both the DDF and the Amplified-and-Forward protocols by
judiciously choosing which protocol to use. Our result shows that the
HDAF protocol outperforms the original DDF in the DMT perspective.
Finally, a variant of the HDAF protocol with lower implementation
complexity without sacrificing the DMT performance is devised.
\end{abstract}


\section{Introduction}
In recent years, cooperative communication has received significant
interest as a means of providing spatial diversity  when time,
frequency, antenna diversity are unavailable due to delay,
bandwidth and terminal size constraints, respectively. Cooperative
techniques  provide diversity by enabling users to utilize one
another's resources and has been extensively studied for the single
source from outage probability analysis or diversity-multiplexing
tradeoff (DMT) perspective \cite{cooperaive_outage},
\cite{cooperative_wirelss}, \cite{on_the_achievableDMT}.

Practical communication systems usually involve more than one user.
One of the most typical models is the multiple-access channel (MAC).
The capacity region of MAC is well known and the DMT is also
developed in \cite{DMT_MAC} for MAC. In \cite{on_the_achievableDMT},
\cite{dispacetime_protocol}, cooperative diversity was extended to
the multiple users cases. For most cooperative protocols,
substantial coordination among the users are required, which may be
impractical due to cost and complexity consideration. Alternatively,
we consider the multiple-access relay channel (MARC),i.e. the MAC
with a single shared relay and focus on the dynamic decode-and
forward (DDF) protocol \cite{on_the_achievableDMT}. For DDF
protocol, the relay does not decode until it is possible to
successfully decode source information message. The relay then
re-encodes the message and transmit it in the remaining coding
interval. In this model we concern, the users need not be aware of
the existence of the relay. All cost and complexity are placed in
the relay and destination. Such an architecture may be suitable for
infrastructure networks, where the relay and destination correspond
the station having more resource(i.e. base station). Moreover, sice
a single relay is shared by multiple users in the MARC,  the extra
cost of adding the relay per user may thus be more acceptable.
Finally, for further enhancement of DMT, we propose a hybrid
protocol which combines DDF and  multiple-access amplify-and-forward
(MAF) \cite{case_MAC}, \cite{DMT_forMARC} protocol to improve the
diversity gain at high multiplexing gain region.

\subsection{Related Research}
The MARC was first introduced in \cite{MARCfirst}. In MARC, the
relay helps multiple sources simultaneously to reach a common
destination. Information-theoretic treatment of the MARC has focused
on two aspects, namely, the capacity region and the DMT (the outage
behavior of slow fading channel in the high signal-to-noise (SNR)
regime \cite{diva}). The achievable rate for the MARC has been
proposed in \cite{Capcity_MARC}, \cite{cooperaive_capacity},
\cite{multiple_mul}. However, the capacity region of general MARC
remains unknown. The DMT for the half-duplex MARC with single
antenna nodes is studied in \cite{case_MAC}, \cite{DMT_forMARC},
\cite{optimality_ARQ}, \cite{cooperative_wirelss}. In
\cite{case_MAC}, \cite{DMT_forMARC}, it is shown that MAF protocol
is DMT optimal for high multiplexing gains; however, this protocol
remains to be suboptimal for low multiplexing gains compared with
DDF strategy \cite{optimality_ARQ}. In addition, the extra overhead
to communicate the channel realizations of source-relay links to
destination is required for MAF. Another relaying strategy for the
MARC is compress-forward (CF) \cite{cooperative_wirelss}. In CF, the
relay exploits Wyner-Ziv coding to compress its received signal and
forward it to destination. It has been shown in
\cite{cooperative_wirelss} that CF also acheves the optimal DMT for
high multiplexing gains but suffers from diversity loss for low
multiplexing gains,  moreover,  CF costs much larger complexity.

The present paper focuses  on the DDF protocol for the half-duplex,
single relay, single antenna case due to its nice balance between
complexity and good performance at low multiplexing gain. Moreover,
we propose a hybrid DDF-MAF (HDAF) protocol to enhance  the  DMT by
improving the poor performance of DDF at high multiplexing gain.

\subsection{Summary of Results}
Previous work, \cite{optimality_ARQ}, \cite{on_the_achievableDMT}
assume an infinite block length such that there is no decoding error
at the relay  and the number of relay decision moments is also
assumed infinite.  Inspired by \cite{coding_and_decoding_DDF} for
the single user, single relay case, we analyze explicitly the
achievable DMT with finite block length and finite relay decision
moments for the MARC. Note  the former is a special case of  the
MARC when there is only one user. Moreover, different from the proof
of DMT achievability in \cite{coding_and_decoding_DDF}, we do not
separately average the probability of event at relay and destination
over the random codebook since they are not fully independent, which
will be made more clear in Section \ref{sec.avg}. As
\cite{coding_and_decoding_DDF}, two issues are discussed in our
model: 1) the effect of decoding errors at the relay. 2) the fact
that the relay decision time is not generally known priori at
designation. In order to tackle 1) a decision rejection criterion at
the relay, such that the relay triggers transmission only when its
decoding is reliable. For 2), the destination  jointly decodes the
relay decision time and the information messages based on the
generalized likelihood ratio test (GLRT) rule.  Our results show
that in order to achieve the DMT,   additional protocol overhead
informing the destination about the relay decision time is not
necessary and the loss of DMT due to decoding error at the relay can
be avoided. Finally, we propose HDAF protocol which takes advantage
of both DDF and MAF protocol. Our analysis shows that with the
\emph{finite} block length $MT$, HDAF protocol outperforms the
original DDF protocol, especially at high multiplexing gain region
and low multiplexing region  when $M$ is small. HDAF also has better
DMT than MAF protocol  when $M$ is moderate large. In addition,
without causing any loss in DMT perspective, a variant of HDAF with
lower complexity by allowing the relay's transmission only after
$M/2$ relay decision moments \cite{cooperaive_latti_half} is
devised. Notice that the analysis with finite $T$ will be much
complicated since we have to deal with events which depends not only
on channel realization(i.e outage)  but also on codebooks when
applying random coding techniques.

In Section \ref{sec.model}, we introduce the system model and review
relevant previous results. In Section \ref{sec.mainresult},  a
characterization of the DMT of the DDF protocol for MARC with finite
block length is  presented. In Section \ref{sec.hybridHDAF}, we
devise HDAF and its variant and characterize the DMT with finite
block length. Section \ref{sec.conclusion} gives the conclusion.

%

\section{Notations and Definitions} \label{sec.model}

\subsection{System Model}
\begin{figure}[t] 
\centering
\includegraphics[scale=1, width=0.6\textwidth]{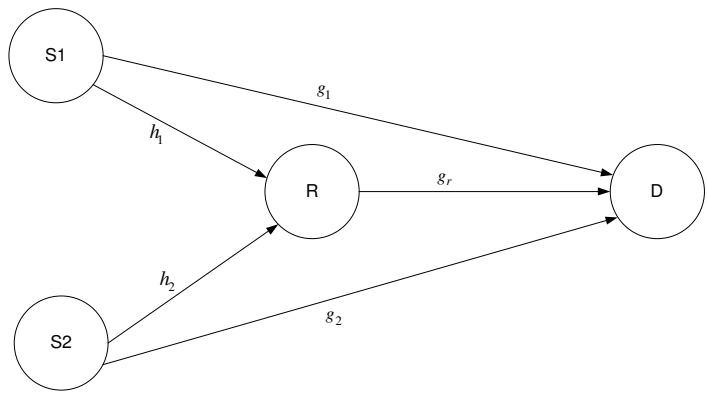}
\caption{MARC model} 
\label{fig.1}
\end{figure}

 We Consider the two-user(S1 and S2) MARC model, a relay node (R)
is assigned to assist the two multiple access users, (see
Fig.\ref{fig.1}). The users are not allowed to help each other (due
to practical limitations, for example). The relay node is
constrained by the half-duplex assumption, i.e. the relay can not
transmit and receive simultaneously. Each node equips with single
antenna.

All wireless links are assumed to be frequency nonselective and
block fading, where the channel coefficients are random but remain
constant over the whole duration of a codeword and the channel
coherence time is much larger than the allowed decoding delay. Let
$h_i$, $g_i$, $g_r$ denote the fading coefficients between the user
${i}$ to relay, user ${i}$ to destination, and relay to destination,
respectively. The channel fading coefficients are i.i.d.
$\mathcal{CN}$(0,1) variables, corresponding to i.i.d. Rayleigh
fading. Here we assume the perfect channel state information at
receiver (CSIR). We adopt the slotted transmission where a codeword
spans $M$ slots and  each slot consists of length $T$ symbols. There
are thus a total block length of $MT$.
 In decode-and-forward protocols, the block of length $MT$ symbols
is split into two phases. In the first phase, the relay receives the
signal from the source until the end of a certain slot, referred as
the decision time. Then  the relay tries to decode the source
message. In the second phase, based on the decoded message, the
relay sends its codeword to the destination in the remaining block .
For DDF protocol, the decision time depends  on the channel
coefficient and the received signal. For the first phase, the signal
received by the relay is
\begin{equation}
y_{r,k}=\sum^{2}_{i}h_{i}x_{i,k}+n_{k},  \;\;\;\;\;
k=1,2,...,\textbf{m}T \label{eq.2}
\end{equation}
Note  the decision time at the end of slot \textbf{m} is a random
variable.  The signal received by the destination is
\begin{equation}
y_{d,k}=\sum^{2}_{i}g_{i}x_{i,k}+v_{k},  \;\;\;\;\;
k=1,2,...,\textbf{m}T \label{eq.3}
\end{equation}
During the second phase, the signal received by the destination is
\begin{equation}
y_{d,k}=\sum^{2}_{i}g_{i}x_{i,k}+g_{r}x_{r,k}+w_{k},  \;\;\;\;\;
k=\textbf{m}T+1,\textbf{m}T+2,...,MT \label{eq.4}
\end{equation}
We let $\textbf{x}_i=[x_{i,1},...x_{i,MT}]^T$ denote the user $i$
codeword with rate $R_i$ bits per symbol. $\textbf{x}_r$ is
similarly defined but only the last $(M-\mathbf{m})T$ symbols are
transmitted. The same average power constraint $P$ per symbol are
imposed on each user and relay,
 \[
 E[|x_{i,k}|^{2}]\leq P, \;\; E[|x_{r,k}|^{2}]\leq P
  \]
where $E[\;]$ denotes expectation over all codewords. The noise at
the relay and destination  are independent Gaussian noise with
variance of $\sigma^{2}_{n}$, $\sigma^{2}_{v}$, denoted as
$n_{k}\sim \mathcal{CN}(0,\sigma^{2}_{n})$ and $w_{k}\sim
\mathcal{CN}(0,\sigma^{2}_{v})$ respectively.
$\rho=\frac{P}{\sigma^{2}_{v}}$ and $\rho'=\frac{P}{\sigma^{2}_{n}}$
define the SNRs of the source-destination and the source-relay links
,respectively. In our model, we consider the symmetric case where
$R_{1}=R_{2}=R/2$, the sum rate is $R$ bits per symbol. As
\cite{optimality_ARQ}, the relay decision time is chosen to be the
instant that the transmitted rate is within the achievable rate
region at the end of $\mathbf{m}$ slot, which satisfies
\begin{align}
&MTR_{1} < m\log(1+|h_{1}|^{2}\rho')T     \label{eq.mac1} \\
&MTR_{2} <m\log(1+|h_{2}|^{2}\rho')T        \label{eq.mac2}  \\
&MTR <m\log\left(1+(|h_{1}|^{2}+|h_{2}|^{2})\rho'\right)T
\label{eq.mac12}
\end{align}
$\mathbf{m}$ is set to the minimum $m=1,2,...,M-1$ such that
(\ref{eq.mac1}), (\ref{eq.mac2}), (\ref{eq.mac12}) hold, otherwise,
$\mathbf{m}=M$  and the relay remains silent.

For later use,  we denote the complement of an event $A$ by
$\overline{A}$, the transpose and  Hermitian transpose  of
$\mathbf{z}$ by $\mathbf{z}^{T}$, $\mathbf{z}^{H}$, respectively.
$(x)^{+}= x$ if $x>0$, otherwise equal to zero. Let
$\textbf{x}^{k}_{i,n},$ and $\textbf{x}^{k}_{r,n},$ denote the
source transmit signal from time $nT+1$ to $kT$ and the relay
transmit signal from time $nT+1$ to $kT$ respectively.
$\textbf{y}^{k}_{r,n},$, $\textbf{x}^{k}_{d,n},$ are similarly
defined.

\subsection{Diversity-Multiplexing Tradeoff (DMT)}
Our work use a lot the notion of DMT posed in \cite{diva}. We only
provide definitions here. Consider a family of codes, such that the
code has a rate of $R(\rho)$, corresponding to SNR $\rho$ bits per
channel use(BPCU) and error probability $P_{E}(\rho)$ The
multiplexing gain $r$ and the diversity gain  defined as
\begin{equation}
r \triangleq \lim_{\rho \rightarrow \infty} \frac{R(\rho)}{\log
\rho} , \;\; d \triangleq -\lim_{\rho \rightarrow \infty} \frac{\log
P_{E}(\rho)}{\log \rho}  \label{eq.dmt}
\end{equation}
and we can write as $P_{E}(\rho)\doteq \rho^{-d}$, where $\doteq$
denotes the exponential equality . $\dot{\leq}$ and $\dot{\geq}$ are
similarly defined. For the point to point multiple-input
multiple-output(MIMO) channel with $m$ transmit and $n$ receive
antennas, $r \leq \min(m,n)$, the optimal $d^{*}(r)$ is referred to
as DMT.

\section{DMT of the DDF protocol for MARC with finite block length}  \label{sec.mainresult}

\subsection{Effect of Finite Block Length}
The DMT of DDF protocol for MARC with symmetric rate
$R_{1}=R_{2}=\frac{r}{2}\log \rho$ has been shown in
\cite{optimality_ARQ},
\begin{equation}
d^{*}_{DDF-MARC}(r)=\begin{cases}
2-r   \;\;\;\;\;\;\;\;  \frac{1}{2}>r \geq 0 \\
3(1-r) \;\; \frac{2}{3}>r \geq \frac{1}{2} \\
2\frac{1-r}{r}   \;\;\;\;\;\;\;\; 1 \geq r \geq \frac{2}{3}
\end{cases} \label{eq.ddf}
\end{equation}
However, to achieve the DMT in (\ref{eq.ddf}), a scheme with
$M\rightarrow\infty$ possible decision times is
necessary\cite{optimality_ARQ}, furthermore, an infinite block
length $T \rightarrow \infty $ is assumed such that there is no
decoding error at the relay. For the practical  code design, the
code length is finite and the error event of decoding at the relay
occurs even though the transmitted rate falls in the achievable rate
region, i.e. (\ref{eq.mac1}),  (\ref{eq.mac2}),  (\ref{eq.mac12})
are satisfied. Forwarding the wrong source information message would
significantly degrade the error performance at the destination. Thus
the probability of relay decoding error dominates the error
probability at the destination. The DMT analysis with finite $M$ and
$T$ has been treated in \cite{coding_and_decoding_DDF} for the relay
case, a special case of  MARC model  if there is only one user. In
our MARC model, there are two users interfering with each other,
thus the derivation of DMT is more involved, moreover, we take an
approach different from \cite{coding_and_decoding_DDF} to averaging
error probability over all the random codebook.

\subsection{Characterization of DMT with Finite Block Length}
In this section, the proof uses the machinery
of \cite{coding_and_decoding_DDF}, \cite{optimality_ARQ},
\cite{DMT_MAC},\cite{on_the_achievableDMT}. Therefore, we only
provide a sketch of the previous results involved and focus on the
novel parts. First, we find an upper bound on the DMT by letting
$T\rightarrow \infty $, and assume the destination has the knowledge
of $\mathbf{m}$. By relaxing the constraint that $T$ is finite, we
characterize the DMT using outage probability analysis, hence make
it as an upper bound for the finite block length. This is
established by the following theorem.
\begin{thm}\label{theorem upperbound}
The DMT of the two users, single-relay, single destination DDF
scheme with decision times $m=1,2,...,M$ and finite slot length $T$
is upper-bounded by
\begin{equation}
d_{out}(r)=\min_{1\leq m\leq
M}\left\{{d}_{m,R}(r)+d_{m,D}(r)\right\} \label{eq.44}
\end{equation}
 \end{thm}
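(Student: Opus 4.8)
The plan is to obtain the bound by a genie argument that removes exactly the two impairments peculiar to finite block length, and then to reduce the idealized system to an outage computation that splits along the relay decision time.

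First I would introduce two genies while keeping $M$ fixed. (a) Let $T\to\infty$. By the multiaccess coding theorem, whenever the source-relay constraints (\ref{eq.mac1})--(\ref{eq.mac12}) hold at the end of slot $m$, the relay decodes both messages with vanishing error probability; hence it never forwards an erroneous codeword, and the relay-decoding-error event that degrades the finite-$T$ destination performance is eliminated. (b) Reveal the realized decision time $\mathbf{m}$ to the destination, removing any penalty for estimating it. Each genie can only decrease the error probability, so the diversity of the idealized system is no smaller than that of the true finite-$T$ scheme with unknown $\mathbf{m}$; writing the former as $d_{out}(r)$ gives $d(r)\le d_{out}(r)$. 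Under (a)--(b) the destination error probability is dominated by outage, and by the high-SNR equivalence of error and outage for Gaussian multiaccess channels (\cite{diva}, \cite{DMT_MAC}) we have $P_E\doteq P_{out}$.

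Next I would decompose $P_{out}$ by the decision time. Because $\mathbf{m}$ is a deterministic function of the source-relay coefficients $(h_1,h_2)$ through (\ref{eq.mac1})--(\ref{eq.mac12}), while the destination mutual information, once $\mathbf{m}=m$ is fixed, depends only on $(g_1,g_2,g_r)$ and on $m$, and because $(h_1,h_2)$ is independent of $(g_1,g_2,g_r)$, I can condition on $\mathbf{m}$ and factor the joint event:
\begin{equation}
P_{out}\;\doteq\;\sum_{m=1}^{M}\Pr\{\mathbf{m}=m\}\,\Pr\{\text{outage at destination}\mid \mathbf{m}=m\}\;\doteq\;\sum_{m=1}^{M}\rho^{-d_{m,R}(r)}\,\rho^{-d_{m,D}(r)} .
\end{equation}
Here $d_{m,R}(r)$ is the SNR exponent of the probability that the relay decides at slot $m$, an outage-type event in the source-relay multiaccess channel governed by $(h_1,h_2)$, and $d_{m,D}(r)$ is the exponent of the destination outage probability given decision time $m$, governed by $(g_1,g_2,g_r)$ through the two-phase mutual information (a pure $(g_1,g_2)$ multiaccess channel over the first $m$ slots and a relay-aided channel over the remaining $M-m$ slots). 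Applying $\sum_m\rho^{-a_m}\doteq\rho^{-\min_m a_m}$ collapses the sum, so that
\begin{equation}
d_{out}(r)\;=\;\min_{1\le m\le M}\{d_{m,R}(r)+d_{m,D}(r)\},
\end{equation}
which is (\ref{eq.44}).

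The main obstacle is the explicit evaluation of the two exponents by the Zheng--Tse Laplace method. Writing $|h_i|^2\doteq\rho^{-\alpha_i}$, $|g_i|^2\doteq\rho^{-\beta_i}$, $|g_r|^2\doteq\rho^{-\gamma}$, one identifies the outage region in the nonnegative exponent variables and minimizes the total cost $\alpha_1+\alpha_2+\beta_1+\beta_2+\gamma$ over it, the mutual-information terms entering the constraints being of the form $(1-\alpha_i)^+$. For $d_{m,D}(r)$ this is delicate: the destination rate constraint is a weighted sum, with weights $m/M$ and $(M-m)/M$, of two multiaccess mutual informations, each carrying its own pair of individual-rate and sum-rate constraints, so the feasible region is a union of polytopes and the minimization must be carried out piecewise in $m/M$ and $r$. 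I would also take care to justify the factorization above rigorously, in particular that conditioning on $\mathbf{m}=m$ couples to the $g$-channels only through the phase boundary $m$; this independence is exactly what yields the product form and hence the additive exponent $d_{m,R}+d_{m,D}$.
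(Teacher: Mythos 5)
Your proposal is correct and follows essentially the same route as the paper's proof: relax to $T\to\infty$ with the decision time $\mathbf{m}$ revealed to the destination, decompose $P_{out}(r)=\sum_{m=1}^{M}P(\mathbf{m}=m)P^{m}_{out}(r)$ using the fact that $\mathbf{m}$ depends only on $(h_1,h_2)$ while the destination outage given $m$ depends only on $(g_1,g_2,g_r)$, collapse the sum to the minimum exponent $d_{m,R}(r)+d_{m,D}(r)$, and invoke the Fano-inequality converse of \cite{DMT_MAC} to conclude that outage is the best achievable, so the bound applies to finite $T$. The only difference is completeness rather than method: the explicit exponents you defer to a Zheng--Tse computation are worked out in the paper's Appendix \ref{appendixupper}, where $\{\mathbf{m}=m\}$ is partitioned into the events $A^{m}_{I}=\mathcal{O}_{I,R}^{m-1}\bigcap\overline{\mathcal{O}_{R}^{m}}$ to obtain $d_{m,R}(r)$ in (\ref{eq.30}), and $d_{m,D}(r)$ in (\ref{eq.40})--(\ref{eq.43}) is assembled from the per-event exponents of \cite{optimality_ARQ}.
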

where $d_{m,R}(r)$ is defined in (\ref{eq.30}), and $d_{m,D}(r)$ is
defined in (\ref{eq.40})-(\ref{eq.43}).
\begin{proof}
We  use similar techniques developed in \cite{DMT_MAC},
\cite{coding_and_decoding_DDF}. Let $P_{out}(r)$ denote the outage
probability at the destination.  $P_{out}^{m}(r)$ denote the outage
probability at the destination for a given $\mathbf{m}=m$. The
outage event $\mathcal{O}^{m}_{D}$ for a given $\mathbf{m}=m$ is
defined as
\begin{equation}
\mathcal{O}^{m}_{D}=\left\{(g_{1},g_{2},g_{r}):
 \left(\mathcal{O}^{m}_{1,D} \bigcup
\mathcal{O}^{m}_{2,D}\bigcup
\mathcal{O}^{m}_{(1,2),D}\right)\right\}
\end{equation}
where $\mathcal{O}^{m}_{1,D}$, $\mathcal{O}^{m}_{2,D}$,
$\mathcal{O}^{m}_{(1,2),D}$ are defined as
\begin{equation}
\begin{split}
\mathcal{O}^{m}_{1,D}=& \left\{(g_{1},g_{r}):
 mT\log\left(1+|g_{1}|^{2}\rho\right) \right. \\
&+\left.(M-m)T\log\left(1+\left(|g_{1}|^{2}+|g_{r}|^{2}\right)\rho\right)\leq
MTR_{1}\right\}
\end{split} \label{eq.31}
\end{equation}
$\mathcal{O}^{m}_{2,D}$ is similarly defined by replacing $1$ with
$2$.
\begin{equation}
\begin{split}
\mathcal{O}^{m}_{(1,2),D}=&\left\{(g_{1},g_{2},g_{r}):
mT\log\left(1+\left(|g_{1}|^{2}+|g_{2}|^{2}\right)\rho\right) \right.\\
&
+\left.(M-m)T\log\left(1+\left(|g_{1}|^{2}+|g_{2}|^{2}+|g_{r}|^{2}\right)\rho\right)\leq
MTR\right\}   \label{eq.32}
\end{split}
\end{equation}

Then write
\begin{equation}
P_{out}(r)=\sum_{m=1}^{M}P(\mathbf{m}=m)P^{m}_{out}(r)
\label{eq.upper}
\end{equation}

Since scaling SNR by a constant $(\rho'/\rho)$ does not change the
DMT. Define
\begin{equation}
\begin{split}
P_{out}(r) & \doteq \rho^{-d_{out}(r)}\\
P_{out}^{m}(r) & \doteq \rho^{-d_{m,D}(r)},\;\;\; 1\leq m \leq M \\
P(\mathbf{m}=m) & \doteq \rho^{-d_{m,R}(r)},\;\;\; 1\leq m \leq M
\end{split} \label{eq.appendixupper}
\end{equation}
Then (\ref{eq.44}) clearly follows from (\ref{eq.upper}) The proof
of (\ref{eq.appendixupper}) is provided in Appendix
\ref{appendixupper}. It remains to show that it is an upper bound of
finite block length case. From standard arguments based on Fano's
inequality \cite{DMT_MAC},  it can be seen that $P_{out}^{m}(r)$ is
indeed the best we can get, thus completes the proof.
\end{proof}

Note in the proof above, the relay decision time $\mathbf{m}$ is
assumed priori known  at the destination. The following theorem
shows that this assumption is not necessary and the upper bound  is
achievable with finite $T$.

\begin{thm}\label{theroem achivable}
The  upper bound of Theorem \ref{theorem upperbound} is achievable
for finite-length $T$ and no priori knowledge of decision time at
the destination decoder.

\end{thm}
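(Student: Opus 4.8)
The plan is to establish achievability by a random-coding argument in which both users and the relay employ independent i.i.d.\ complex Gaussian codebooks, the relay is equipped with a decision-rejection rule, and the destination runs a GLRT over the unknown decision time. First I would have each user $i$ draw $\mathbf{x}_i$ with entries i.i.d.\ $\mathcal{CN}(0,P)$ and the relay draw a codebook indexed by the decoded message pair, so that once the relay commits to a pair $(\hat{w}_1,\hat{w}_2)$ at slot $\mathbf{m}$ it transmits the associated $\mathbf{x}_r$ over the last $(M-\mathbf{m})T$ symbols. The relay computes $\mathbf{m}$ exactly as in (\ref{eq.mac1})--(\ref{eq.mac12}), but before forwarding it applies the rejection criterion: it declares its decision reliable only if the winning maximum-likelihood pair exceeds a suitable decoding threshold (equivalently, is separated from all competitors). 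If the test fails the relay stays silent for the remainder of the block, so that any symbol it does forward carries a message it genuinely believes.

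Second, I would control the event that the relay forwards an \emph{incorrect} message, which is where finite $T$ matters most: without a rejection rule a marginal source--relay channel can produce a non-vanishing relay decoding error that would destroy the DMT. Conditioned on $(h_1,h_2)$ and on $\mathbf{m}=m$, the rejection threshold is chosen so that the probability of a reliable-but-wrong relay decision decays with a sufficiently large SNR exponent; combined with $P(\mathbf{m}=m)\doteq\rho^{-d_{m,R}(r)}$, this guarantees that the relay-error contribution does not pull the overall exponent below $d_{out}(r)$. Whenever the channel is strong enough to trigger $\mathbf{m}=m$, the residual finite-$T$ decoding error is either driven down by the Gaussian exponent or intercepted by the rejection rule as silence.

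Third, I would analyze the destination. Because $\mathbf{m}$ is unknown, the GLRT searches jointly over the $M$ candidate decision times and all message pairs, selecting the $(m,w_1,w_2)$ maximizing the likelihood under the two-phase model (\ref{eq.3})--(\ref{eq.4}). Conditioning on $(g_1,g_2,g_r)$ and on a correct (or silent) relay, the conditional error probability is bounded by a pairwise-error union bound over competing pairs; the Gaussian ensemble makes each pairwise term an exponential in the mutual-information quantities defining $\mathcal{O}^{m}_{1,D}$, $\mathcal{O}^{m}_{2,D}$, $\mathcal{O}^{m}_{(1,2),D}$. The extra search over $m$ contributes only the factor $M$, a constant in $\rho$ that is invisible to the DMT. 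Averaging this conditional bound over $(g_1,g_2,g_r)$ reproduces $P^{m}_{out}(r)\doteq\rho^{-d_{m,D}(r)}$, and combining with the relay weighting and summing over $m$ recovers $\min_m\{d_{m,R}(r)+d_{m,D}(r)\}=d_{out}(r)$ of (\ref{eq.44}), matching Theorem \ref{theorem upperbound}.

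The hard part will be the averaging over the random codebook, as flagged in Section \ref{sec.avg}: the relay's forwarded codeword is a deterministic function of what it decoded, so the relay-decoding event and the destination-decoding event share the same codebook randomness and are \emph{not} independent. I therefore cannot multiply separately averaged relay and destination error probabilities as in \cite{coding_and_decoding_DDF}; instead I would fix the channel realizations, bound the relay and destination events \emph{jointly} on the common ensemble, and only then average over the fading. The delicate case is when the relay forwards a wrong message that happens to be jointly typical at the destination; showing that, thanks to the rejection rule, this correlated event still decays with the required exponent is precisely what makes the upper bound of Theorem \ref{theorem upperbound} achievable with finite $T$ and no knowledge of $\mathbf{m}$, thereby proving Theorem \ref{theroem achivable}.
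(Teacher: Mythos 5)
Your overall architecture matches the paper's: Gaussian random ensemble, a rejection rule at the relay with a threshold growing like $\log\rho$ (the paper takes $\delta=\mu\log\rho$, $\mu T>3$, so the forwarded-wrong-message term in (\ref{eq.48}) is exponentially irrelevant), a GLRT over the $M$ candidate decision times at the destination (costing only a constant factor $M$), and the refusal to average relay and destination events separately over the common codebook. But your relay protocol contains a genuine flaw. You fix $\mathbf{m}$ from the channel conditions (\ref{eq.mac1})--(\ref{eq.mac12}) alone, apply the rejection test \emph{once} at that slot, and keep the relay silent for the rest of the block if the test fails. In the paper the rejection is folded into the decision time itself: if $\bold{y}_{r,0}^{m}\notin\mathcal{U}_{m}$ the relay waits and re-tests at slot $m+1$, which is precisely why $P(\mathbf{m}=m)$ decomposes as in (\ref{eq.51}). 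Under your one-shot rule, whenever the fading sits near the slot-$\mathbf{m}$ decodability boundary (e.g. $m\bigl(1-\min(\alpha_{1},\alpha_{2})\bigr)\approx rM$ on the exponent scale, which for $m$ just above $rM$ includes channel realizations of vanishing exponent cost), the expected number of competing codeword pairs inside the sphere of (\ref{eq.53}) is of order $\rho^{0}$, so rejection occurs with probability $\Theta(1)$ in $\rho$. Your relay then never transmits, the destination faces the bare two-user MAC, and the conditional exponent collapses to $d_{M,D}(r)=\tfrac{2-r}{2}$ (take $f=1$ in (\ref{eq.40})). The resulting error term has exponent roughly (measure of the near-boundary shell)$\,+\,\tfrac{2-r}{2}$, which for moderate $M$ falls strictly below $d_{out}(r)$ of (\ref{eq.44}); e.g.\ for small $r$ and $M$ large, $d_{out}(r)\approx 2-r$ while your bad event contributes only about $\tfrac{2-r}{2}$. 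Re-testing at the next slot repairs this at the cost of one slot of relay listening, a cost already priced into $\min_{m}\{d_{m,R}(r)+d_{m,D}(r)\}$.

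A secondary issue: you locate the codebook-correlation difficulty at the wrong event. The ``reliable-but-wrong, jointly typical at the destination'' case is \emph{not} the delicate one; with $\delta=\mu\log\rho$ the paper kills $P(E_{r}|\bold{m}=m)\,\dot{\leq}\,\rho^{-mT\mu}$ outright, independently of the destination. The term that genuinely forces joint averaging is term \emph{(2)} of Appendix \ref{appenddixlower}: the relay decodes \emph{correctly} at slot $m$, but its decision was delayed because $\bold{y}_{r,0}^{m-1}\notin\mathcal{U}_{m-1}$, i.e.\ some competing pair was geometrically close over the source--relay links, and the \emph{same} codeword differences $\Delta\bold{x}$ then drive the destination pairwise error through $\bold{z}_{m'}$. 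Your plan --- bound both events for a fixed codebook and average the product over the ensemble --- is exactly the paper's move (the exponential domination (\ref{eq.54}) of the indicator, the stacked quadratic form and determinant (\ref{eq.58}), and the split (\ref{eq.63_1})--(\ref{eq.63_2}) recovering $d_{m,R}(r)+d_{m,D}(r)$ for $T$ finite but large enough, e.g.\ $T\geq 4$), but it must be aimed at this decision-delay event rather than at the wrong-forward event, which needs no joint treatment at all.
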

\begin{proof}
In order to prove the achievability, we use standard random coding
argument with bounded distance decoder \cite{optimality_ARQ} at the
relay to overcome the effect of relay decoding error
\cite{coding_and_decoding_DDF}. Owing to the introduction of the
bounded distance decoder at the relay, the probability of relay's
decision time at $m$ slot and  the probability of decoding error at
destination  given relay decision time $m$ are not independent(the
relay's decision time not only depends on the source-relay links but
also on the codewords). Therefore,  different from
\cite{coding_and_decoding_DDF} where these two terms are
\emph{separately } averaged over the random ensemble, we take an
approach to directly averaging the resulting error probability at
the destination over the random ensemble.

\emph{Codebook Generation}: For given $M$, $T$,
$R_{1}=R_{2}=\frac{r}{2}\log \rho$, according to i.i.d components
$\mathcal{CN}(0,P)$, independently generate $\rho^{\frac{rMT}{2}}$
codewords, $\mathcal{X}_{i}\subset\mathbb{C}^{MT}$, for each $i=1,2$
  and $\mathcal{X}_{r}\subset\mathbb{C}^{MT}$ of cardinality
 $\rho^{rMT}$. We let $\bold{x}_{i}(w_{i})$, $\bold{x}_{r}(w)$
denote the codewords in $\mathcal{X}_{i}$ and in $\mathcal{X}_{r}$
respectively, corresponding to the information message $w_{i} \in
\{1,...,\rho^{\frac{rMT}{2}} \} $, $w \in \{1,...,\rho^{{rMT}} \} $.

\emph{Relay Decoding:}  From (\ref{eq.mac1})-(\ref{eq.mac12}), we
define the relay outage event at slot $m$ as
\begin{equation} \label{eq.45}
\mathcal{O}_{R}^{m}=\left \{ (h_{1},h_{2}):
\left(\mathcal{O}_{1,R}^{m}\bigcup \mathcal{O}_{2,R}^{m}\bigcup
\mathcal{O}_{(1,2),R}^{m}\right)\right\}
\end{equation}
and
\begin{equation}
\mathcal{O}_{1,R}^{m}\triangleq\left\{h_{1} : |h_{1}|^{2} \leq
\frac{\rho^{\frac{\frac{r}{2}M}{m}}-1}{\rho'}\right\}
\label{eq.relaymac1}
\end{equation}
$\mathcal{O}_{1,R}^{m}$ is similarly defined.
\begin{equation}
\mathcal{O}_{(1,2),R}^{m} \triangleq
\left\{(h_{1},h_{2}):|h_{1}|^{2}+|h_{2}|^{2} \leq
\frac{\rho^{\frac{rM}{m}}-1}{\rho'}\right\} \label{eq.relaymac12}
\end{equation}
Due to the finite block length $MT$, the relay may decode in error
even $(h_{1}, h_{2}) \notin \mathcal{O}_{R}^{m}$. Then an incorrect
codeword is sent to the destination, causing significant
interference. Thereofre, similar to \cite{mimoarq},
\cite{coding_and_decoding_DDF} for the relay case, we introduce a
bounded distance relay decoding decision function $\psi_{\delta}$
defined as follows: for $m=1,...,M-1$, define the regions
$\mathcal{S}_{m}(w_{1},w_{2})$ consisting of all points $\bold{y}\in
\mathbb{C}^{mT}$ for which $(w_{1},w_{2})$ is the unique message
enclosed in a sphere of squared radius $mT(1+\delta)\sigma_{n}^{2}$
centered at $\bold{y}$, i.e.,
$|\bold{y}-h_{1}\bold{x}_{1,0}^{m}(w_{1})-h_{2}\bold{x}_{2,0}^{m}(w_{2})|^{2}\leq
mT(1+\delta)\sigma_{n}^{2}$. If \\
1) $(h_{1},h_{2})\notin \mathcal{O}^{m}_{R}$, \\
2) $\bold{y}_{r,0}^{m} \in
\mathcal{S}_{m}(\hat{w}_{1},\hat{w}_{2})$,
\\
Then, $\psi_{\delta}(\bold{y}_{r,0}^{m},h_{1},h_{2})$
 outputs the decoded message $\hat{w}$, where
$\hat{w}=(\hat{w}_{1},\hat{w}_{2})$  and the relay starts to
transmit the signal $\mathbf{x}_{r,m}^{M}(\hat{w})$  for the
remaining part of the block. Otherwise, it waits for the next slot.

\emph{Destination Decoding : } The destination is not aware of the
relay decision time $\bold{m}$, hence it simultaneously detects the
decision time and the information message according to the GLRT
rule:\\
\begin{equation} \label{eq.46}
\{\hat{w},\hat{m}\}=\arg \max_{w,m}p(\mathbf{y}^{M}_{d,0}|w,m)
\end{equation}
where $p(\mathbf{y}^{M}_{d,0}|w,m)$ is the decoder likelihood
function.

\emph{Error Probability Analysis : } Let $E$ denote the decoding
error event at the destination and $E_{r}$ denote the decoding error
event at the relay. Follow the steps in
\cite{coding_and_decoding_DDF}, we have the following results,
\begin{align}
P_{E}=&\sum_{m=1}^{M} P(\mathbf{m}=m)P(E|\mathbf{m}=m) \label{eq.47} \\
\leq & \sum_{m=1}^{M} P(\mathbf{m}=m)
\left(P(E_{r}|\mathbf{m}=m)+P(E|\overline{E}_{r},\mathbf{m}=m)
\right) \label{eq.48}
\end{align}

For $m=1,...,M-1$,
let $\delta=\mu \log{ \rho}$, we have
\begin{equation}
P(E_{r}|\bold{m}=m)\dot{\leq}\rho^{-mT\mu}
\end{equation}
Since
$P(E|\overline{E}_{r},\mathbf{m}=m)\dot{\geq}\rho^{-d_{m,D}(r)}$, we
can choose a sufficiently  large finite $\mu$ to make the terms
$P(E_{r}|\bold{m}=m) $ exponentially irrelevant in (\ref{eq.48}),
i.e. since $d_{m,D}(r)<3$, we can choose $\mu T >3$. As for the
other terms in (\ref{eq.48}), as mentioned earlier, they are not
independent event since the relay's decision time not only depends on
the source-relay links but also on the codewords. In Appendix
\ref{appenddixlower}, unlike \cite{coding_and_decoding_DDF}, by
averaging these two terms together, we show the average probability
of the term $P(\mathbf{m}=m)P(E_{r}|\mathbf{m}=m)$ using Gaussian
random ensemble is exponentially upper bounded by
 $d_{out}(r)$, thus completes the proof.

\end{proof}
It appears intractable to obtain a closed form  of (\ref{eq.44}). In
Fig.\ref{fig.2}, $d_{out}(r)$ are plotted for $M=2,5,,10,20$.
\begin{figure}[t] 
\centering
\includegraphics[scale=1, width=0.6\textwidth]{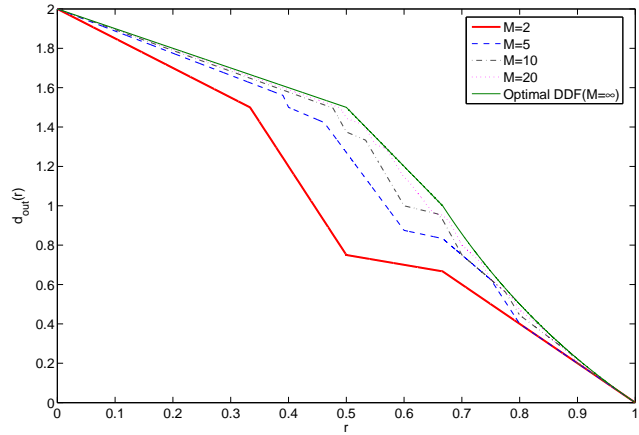}
\caption{The DMT of the DDF protocol for MARC with finite $M$} 
\label{fig.2}
\end{figure}
As $M$ grows,  $d_{out}(r)$  is seen to approach  the optimal DMT
in\cite{optimality_ARQ} where $M\rightarrow\infty$. Fig.\ref{fig.2}
also shows that even for a moderate value of $M$, $d_{out}(r)$
achieves the DMT close to  optimal one.
\section{Hybrid Amplified-forward and decode-forward protocol}
\label{sec.hybridHDAF}
\subsection{The DMT of Hybrid Amplified-Forward and Decode-Forward protocol }
It has been reported in \cite{case_MAC}
\cite{DMT_forMARC}, the DMT of the MAF protocol is given by
\begin{equation}
d_{MAF}(r)=
\begin{cases}
2-\frac{3r}{2}, \;\;\;\; \;\;\; 0\leq r \leq \frac{2}{3} \\
3(1-r),  \;\;\; \frac{2}{3}\leq r \leq 1
\end{cases}
\label{eq.mafdmt}
\end{equation}
The optimal diversity gain for high multiplexing gain
($\frac{2}{3}\leq r \leq 1$) is achieved by MAF protocol. On the
contrast, for the DDF strategy, the relay will only be able to
during a small fraction of time slots, hence suffers form a loss
compared to MAF. Combined with MAF protocol, the DMT may be further
improved.  This motivates us to  propose a new hybrid
strategy as follows: \\
\emph{(1)} For $ \frac{2}{3}< r \leq 1$, the relay
simply uses the MAF protocol. \\
\emph{(2)} For $ \frac{1}{2}< r \leq \frac{2}{3}$, the relay
simply uses the DDF protocol. \\
\emph{(3)} For $ 0 \leq r \leq \frac{1}{2}$,  the relay dynamically
decode the source messages before $M/2$ (including $M/2$) time slots
,where we assume M is even. If it can not successfully decode , then
the MAF protocol is instead used after the time slot $M/2$.

The goal of the hybrid strategy aims to take  advantages of both MAF
and DDF protocol, we refer it as HDAF protocol. Note HDAF has the
same setting as Section \ref{sec.mainresult} (i.e. finite block
length) when DDF is chosen to be used.  We have the following
theorem and Fig.\ref{fig.2} showing that HDAF outperforms DDF
especially at $\frac{2}{3}< r \leq 1$ and $ 0 \leq r \leq
\frac{1}{2}$ when the number of time slots, namely $M$ is small
since HDAF can achieve the optimal DMT of DDF protocol with infinite
$M$ for $0\leq r \leq \frac{1}{2}$ even a finite $M$ time slots
being used. Furthermore, HDAF  has better diversity gain than MAF
for $ \frac{1}{2}< r \leq \frac{2}{3}$ when $M$ is large enough
since it will approach close to the DMT of DDF with infinite $M$.
\begin{thm} \label{theoremhybrid}
The DMT of HDAF protocol is
\begin{equation}
d^{HDAF}(r)=
\begin{cases}
2-r \;\;\;\;\;\;\;\;\;\, 0\leq r \leq \frac{1}{2}\\
d_{out}(r)\;\;\;\;\;\; \frac{1}{2}< r \leq \frac{2}{3}\\
3(1-r) \;\;\;\;\frac{2}{3}< r \leq 1
\end{cases}
 \label{eq.hybdmt}
\end{equation}
\end{thm}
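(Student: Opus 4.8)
The plan is to split the proof according to the three multiplexing-gain regimes in (\ref{eq.hybdmt}), because by construction HDAF collapses to a single known protocol in two of them and only the low-rate regime needs a new argument. For $\frac{2}{3} < r \leq 1$ the relay runs MAF verbatim, so $d^{HDAF}(r) = d_{MAF}(r) = 3(1-r)$ is immediate from (\ref{eq.mafdmt}); for $\frac{1}{2} < r \leq \frac{2}{3}$ the relay runs DDF verbatim under the same finite-$(M,T)$ model as Section \ref{sec.mainresult}, so $d^{HDAF}(r) = d_{out}(r)$ follows directly from Theorems \ref{theorem upperbound} and \ref{theroem achivable}. Hence the whole content of the theorem is the claim $d^{HDAF}(r) = 2-r$ on $0 \leq r \leq \frac{1}{2}$, which I would establish by a matching converse and achievability.

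For the converse $d^{HDAF}(r) \leq 2-r$ I would use a genie-aided cut. Revealing user~2's message to the destination leaves user~1 communicating over a single-source single-relay link in which source~1 and the relay can, at best, act as two cooperating transmit antennas seen by one receive antenna. Since the HDAF error probability is at least that of this genied user-1 subchannel, its diversity is bounded by the $2\times 1$ MISO optimal tradeoff $2(1-r_1)$ evaluated at user~1's multiplexing gain $r_1 = r/2$ (recall $R_1 = \tfrac{r}{2}\log\rho$), which gives $2(1-r/2) = 2-r$.

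For the achievability $d^{HDAF}(r) \geq 2-r$ I would bound the HDAF outage by the contributions of the only two modes the protocol can enter: the capped-DDF modes $\mathbf{m}=m$ with $1 \leq m \leq M/2$, and the MAF-fallback mode triggered when $(h_1,h_2)\in\mathcal{O}_R^{M/2}$:
\begin{equation}
P_{out}^{HDAF}(r) \;\dot{\leq}\; \sum_{m=1}^{M/2} P\!\left(\{\mathbf{m}=m\}\cap\mathcal{O}_D^{m}\right) \;+\; P\!\left(\mathcal{O}_R^{M/2}\cap\mathcal{O}_D^{MAF}\right),
\end{equation}
where $\mathcal{O}_D^{MAF}$ is the destination outage event of the amplify-forward transmission occupying slots $\tfrac{M}{2}+1,\dots,M$. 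As in the proof of Theorem \ref{theroem achivable}, I would first install the bounded-distance relay decoder with $\delta=\mu\log\rho$, so the relay-decoding-error terms carry exponent $mT\mu$ and are made irrelevant by taking $\mu$ large, leaving pure channel-outage quantities. The capped-DDF sum is exactly the object of Appendix \ref{appendixupper} but with the minimization restricted to $m\leq M/2$, so its $m$-th term has exponent $d_{m,R}(r)+d_{m,D}(r)$, and I would verify this is at least $2-r$ for every $m\leq M/2$ when $r\leq\tfrac12$, the binding case being $m=M/2$, where the relay still has half the block to forward.

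The hard part will be the MAF-fallback term $P(\mathcal{O}_R^{M/2}\cap\mathcal{O}_D^{MAF})$, and the reason it cannot be split the way \cite{coding_and_decoding_DDF} splits its terms is that the amplify-forward channel seen at the destination carries the factors $g_r h_i$ and the amplified noise $g_r n$, so $\mathcal{O}_D^{MAF}$ depends on the very coefficients $h_1,h_2$ that $\mathcal{O}_R^{M/2}$ already forces to be weak. I would therefore compute the joint exponent directly by the Zheng--Tse (Laplace) method: parametrize $|h_i|^2\doteq\rho^{-a_i}$, $|g_i|^2\doteq\rho^{-b_i}$, $|g_r|^2\doteq\rho^{-c}$, write the mutual-information outage inequalities for the half-listen/half-forward AF link jointly with the relay sum-rate constraint defining $\mathcal{O}_R^{M/2}$, and minimize $\sum_i(a_i)^+ + \sum_i(b_i)^+ + (c)^+$ over that region. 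I expect the minimizer to reveal that precisely the channel realizations excluded from capped-DDF (source--relay too weak to decode by $M/2$) are carried by the AF path with enough residual diversity that this exponent also equals $2-r$, so that the overall minimum in the displayed bound is $2-r$ and matches the converse. Showing that this coupled, constrained minimization returns exactly $2-r$ rather than something smaller, which would break the theorem, is the crux of the argument.
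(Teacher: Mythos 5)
Your regime split, the converse (the paper simply cites \cite{optimality_ARQ} where you give a genie/MISO argument, which is fine), the capped-DDF sum with $d_{m,D}(r)=2-r$ for $f\leq\frac12$, and the joint Zheng--Tse minimization over the coupled region $\mathcal{O}_R^{M/2}\cap\mathcal{O}_D^{MAF}$ (including the $h_i$-coupling through the amplified path) all match the paper's treatment of the \emph{first} term of (\ref{eq.hyb3}), i.e.\ (\ref{eq.hyb13})--(\ref{eq.hyb8}). But there is a genuine gap: your decomposition asserts that the MAF fallback is ``triggered when $(h_1,h_2)\in\mathcal{O}_R^{M/2}$,'' and that installing the bounded-distance decoder with $\delta=\mu\log\rho$ and $\mu$ large leaves ``pure channel-outage quantities.'' That is false. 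With finite $T$, MAF is also triggered by the event $\{(h_1,h_2)\notin\mathcal{O}_R^{M/2}\}\cap\{\bold{y}_{r,0}^{M/2}\notin\mathcal{U}_{M/2}\}$ --- the relay \emph{rejects} despite the channel not being in outage. Enlarging $\delta$ suppresses only the \emph{undetected-error} probability $P(E_r|\bold{m}=m)\dot{\leq}\rho^{-mT\mu}$; the rejection probability is dominated by the codeword-ambiguity sum in (\ref{eq.53}), which \emph{grows} with $\delta$ and is a codebook-dependent event correlated with the subsequent destination error. Your displayed union bound simply omits this event, so your claimed exponent $2-r$ is unproven for exactly the term the paper flags as the finite-block-length complication (``the second term need to be taken into account which much complicates the analysis'').

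The paper's Appendix~\ref{appendixhybrib} is devoted to this missing term. It splits it on the MAF outage event $O_{MAF}$: on $O_{MAF}$ it bounds $P(E|\cdot)\leq 1$ and instead bounds the rejection probability via (\ref{eq.53})--(\ref{eq.54}) averaged over the Gaussian ensemble, yielding (\ref{eq.hyb7}) and, after averaging over $O_{MAF}\cap\overline{\mathcal{O}_R^{M/2}}$, an exponent $\geq 2-r$ (note that using $P(O_{MAF})$ alone would give only $2-\frac{3r}{2}$); on $\overline{O}_{MAF}$, a naive bound again yields only $2-\frac{3r}{2}$ for the single-user error types, so the paper re-applies the joint-averaging machinery of Appendix~\ref{appenddixlower} to the equivalent MAF channel (\ref{eq.hybMAFmodel}), averaging the \emph{product} of the rejection indicator and the destination pairwise-error term over the random ensemble, as in (\ref{eq.hyb12}). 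Without an argument of this kind --- one that handles the correlated, codebook-dependent rejection event rather than a channel-outage event --- your proof establishes the theorem only in the $T\to\infty$ limit, not for the finite block length the theorem is about.
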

\begin{proof}
To characterize the DMT of HDAF protocol $d_{HDAF}(r)$, for $
\frac{2}{3}< r \leq 1$, the MAF protocol will be used and
$d_{HDAF}(r)=d_{MAF}(r)$. The advantages of MAF for high
multiplexing gain is preserved.
For $ \frac{1}{2}< r \leq \frac{2}{3}$, the DDF protocol will be
used and $d_{HDAF}(r)=d_{out}(r)$ in (\ref{eq.44}), hence this
region entails no loss in terms of DMT compared to DDF protocol. For
$ 0 \leq r \leq \frac{1}{2}$, from  \cite{optimality_ARQ}, we know
$2-r$ is indeed the upper bound for MARC, we will show that
$d_{HDAF}(r)\geq 2-r$, thus establish $d_{HDAF}(r)=2-r$. First
consider the case where source message is decoded at the relay
before $M/2$ time slots. In this case,  since the upper bound of
(\ref{eq.44}) is achievable by finite code length, we can similarly
write $d_{HDAF}(r)$ as
\begin{equation}
d_{HDAF}(r)=\min_{1\leq m\leq
M/2}\left\{{d}_{m,R}^{HDAF}(r)+d_{m,D}^{HDAF}(r)\right\}
\label{eq.hyb1}
\end{equation}  It clear that
$d_{m,R}^{HDAF}(r)=d_{m,R}(r)$, $d_{m,D}^{HDAF}(r)=d_{m,D}(r)$ for
$1\leq m \leq M/2$.  From (\ref{eq.40}),  the diversity gain at
least $2-r$ is obtained since $d_{m,D}^{HDAF}(r)=2-r$. However, this
is not necessarily the case since the source message may  be decoded
only after $M/2$ time slots and the MAF protocol will be used
instead. It remains to derive the DMT of  HDAF protocol for the case
where decoding moment  $\mathbf{m}>M/2$. In \cite{case_MAC}, the
error provability $P_{e}$ using MAF protocol is shown to be
upper-bounded by
\begin{equation}
P_{e} \leq P_{e_{1}}+P_{e_{2}}+P_{e_{(1,2)}} \label{eq.hyb9}
\end{equation} where
$P_{e_{I}}$, $I=1,2,(1,2)$ represent the probability of the error
event that user(s) $I$  are detected in error. $P_{e_{I}}$'s  are
averaged over the ensemble of channel realizations, and thus leads
to lower diversity gain $2-\frac{3r}{2}$  for $0\leq r\ \leq
\frac{2}{3}$. However, for our case, the MAF protocol is used only
when the relay can not decode the source message before $M/2$ time
slots and the resulting diversity gain is expected to be larger. To
deal with the HDAF case, from (\ref{eq.51}), the probability that
the source message can not be decoded before $M/2$ time slots can be
split into two events
\[
 \left\{(h_{1},h_{2})\in
\mathcal{O}_{R}^{\frac{M}{2}}\right\}\bigcup
\left\{\{(h_{1},h_{2})\notin
\mathcal{O}_{R}^{\frac{M}{2}}\},\{\bold{y}_{r,0}^{\frac{M}{2}}\notin
\mathcal{U}_{\frac{M}{2}}\}\right\}
\]
Then we need to upper bound the  probability that MAF is used and
decoding error occurs at the destination,   defined as
\begin{equation}
\begin{split}
P_{E,MAF}&=P\left(E, \left\{(h_{1},h_{2})\in
\mathcal{O}_{R}^{\frac{M}{2}}\right\}\right)\\
&+P\left(E,\left\{\{(h_{1},h_{2})\notin
\mathcal{O}_{R}^{\frac{M}{2}}\},\{\bold{y}_{r,0}^{\frac{M}{2}}\notin
\mathcal{U}_{\frac{M}{2}}\}\right\}\right)
 \end{split}  \label{eq.hyb3}
\end{equation}
 For the infinite block
code length case, only the first term on the right hand side
(\ref{eq.hyb3}) need to be considered. Nevertheless, in our finite
block code length case , the second term need to be taken into
account which much complicates the analysis. We upper bound the
first term here and leave the analysis of the second term in
Appendix \ref{appendixhybrib}.

Similar to (\ref{eq.hyb9}), partition the first term on the right
hand side of (\ref{eq.hyb3}) into three mutually exclusive error
events. To ease the notation, we denote them as $P_{e_{I}}$ in this
Section. Define $|h_{i}|^{2}=\rho^{-\alpha_{i}}$,
$|g_{i}|^{2}=\rho^{-\beta_{i}}$, where $i=1,2,r$. From
\cite{on_the_achievableDMT}, the $P_{e_{1}}$ conditioned on $h_{i}$,
$g_{i}$ is
\begin{equation}
P_{e_{1}|\alpha_{1},\beta_{1},\beta_{r}}\dot{\leq}
\rho^{-\frac{MT}{2}\left[\left(\max\left\{2(1-\beta_{1}),
1-(\beta_{r}+\alpha_{1})\right\}\right)^{+}-r\right]}
 \label{eq.hyb2}
\end{equation}
and define the outage event as
\begin{equation}
\begin{split}
O_{1,HDAF}^{+} = \left\{(\alpha_{i},\beta_{i}) \in
\left\{\mathbb{R}^{5+}\bigcap \left(O_{1,R}^{\frac{M}{2}}\bigcup
O_{2,R}^{\frac{M}{2}}\bigcup
O_{(1,2),R}^{\frac{M}{2}}\right)\right\}: \right. \\
\left. \left(\max\left\{2(1-\beta_{1}),
1-(\beta_{r}+\alpha_{1})\right\}\right)^{+}\leq r\right\}
\end{split}
\label{eq.hyb13}
\end{equation}
From (\ref{eq.relaymac1})-(\ref{eq.relaymac12}),
\[
O_{i,R}^{\frac{M}{2}}=\left\{\alpha_{i}: (1-\alpha_{i})^{+}\leq
\frac{r}{2}\right\}  \;\;\;\;i=1,2
\]
\[
O_{(1,2),R}^{\frac{M}{2}}=\left\{(\alpha_{1},\alpha_{2}):
(1-\min(\alpha_{1}, \alpha_{2}))^{+}\leq r \right\}
\]
Let $P_{e_{i}}\doteq\rho^{d_{P_{e_{i}}}(r)}$, by discussing cases of
(\ref{eq.hyb13}) for different events of  $O_{I,R}^{\frac{M}{2}}$ ,
it can be shown a diversity gain at least $2-r$ is obtained.
\begin{equation}
d_{P_{e_{1}}}(r)=\inf_{O_{1,HDAF}^{+}}\left\{\alpha_{1}+\alpha_{2}+\beta_{1}+\beta_{2}+\beta_{r}\right\}\geq
2-r \label{eq.hyb4}
\end{equation}
$d_{P_{e_{2}}}(r)$ can be derived  in identical manner. For
$d_{P_{e_{(1,2)}}}(r)$,  the DMT exponent conditioned on
$h_{1},h_{2}$  \cite{case_MAC}, is
\begin{equation}
\begin{split}
&d_{P_{e_{(1,2)}}|(h_{1},h_{2})}(r)\\=&
\begin{cases}
2(1-r)^{+}  \;\;\;\;\;\;\;\;\;\;\;\;\;\;\;\;\;\;\;\;\;\;\;\;\;\;\;\;\;\;\; \min\{\alpha_{1}, \alpha_{2}\}>(1-r)^{+}\\
\left[3(1-r)-\min\{\alpha_{1}, \alpha_{2}\}\right]^{+}\;0\leq
\min\{\alpha_{1}, \alpha_{2}\}\leq(1-r)^{+}
\end{cases}
\end{split}\label{eq.hyb5}
\end{equation}
Then
\begin{equation}
d_{P_{e_{(1,2)}}}=\inf_{(\alpha_{1},\alpha_{2}) \in
O_{R}^{\frac{M}{2}}}
\left\{\alpha_{1}+\alpha_{2}+d_{P_{e_{(1,2)}}|(h_{1},h_{2})}(r)\right\}>(2-r)
\label{eq.hyb8}
\end{equation}
\end{proof}
Theorem \ref{theoremhybrid} shows that even for small $M$, HDAF can
still achieve the optimal DMT of DDF protocol with infinite $M$ when
$0\leq r \leq \frac{1}{2}$. The loss in DMT of finite $M$ compared
to infinite $M$ case comes from  the fact that the event of
erroneous decoding at the destination when the relay starts
transmission after $M/2$ time slots becomes the dominant error
event, which will not occur in the HDAF strategy since MAF is used
instead in that case.
\subsection{Variant of HDAF protocol}
Indicated in \cite{cooperaive_latti_half}, allowing the relay node
to start transmission at any time slots may result in higher
complexity since this requires a very high-dimensional constellation
to ensure the possibility of source message being uniquely decodable
within a small code length. We can also prove  the modified HDAF
protocol that allowing the relay to transmit \textsl{only} after the
$M/2$ time slots do not entail any loss in the DMT perspective.
\begin{thm}
The modified HDAF protocol still achieves the DMT of Theorem
\ref{theoremhybrid}.
\end{thm}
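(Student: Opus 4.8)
The plan is to treat each of the three multiplexing-gain ranges of Theorem \ref{theoremhybrid} separately and show that forbidding the relay to transmit before the end of slot $M/2$ leaves the DMT intact. In the two upper ranges the modified scheme will turn out to coincide with the original, so equality of the DMT is immediate, while in the lowest range the MARC upper bound $2-r$ of \cite{optimality_ARQ} supplies the converse and only achievability must be re-argued. For $\frac{2}{3}<r\le 1$ HDAF runs MAF, in which the relay listens over the first half-block and forwards only over the second; the restriction is then vacuous and $d^{HDAF}(r)=3(1-r)$ persists.

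For $\frac{1}{2}<r\le\frac{2}{3}$ the DDF mode is active, and I would argue that within the channel regime governing the DMT the relay can never reach a decision before slot $M/2$, so the two protocols are identical. Writing $|h_i|^2=\rho^{-\alpha_i}$ and keeping only $\alpha_i\ge 0$ (more favorable fades carry doubly-exponentially small probability and are DMT-irrelevant), the complement of the sum-rate outage (\ref{eq.relaymac12}) at slot $m$ requires $\bigl(1-\min(\alpha_1,\alpha_2)\bigr)^{+}\ge \frac{rM}{m}$, which at $m=M/2$ reads $\bigl(1-\min(\alpha_1,\alpha_2)\bigr)^{+}\ge 2r>1$ and is infeasible. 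Hence no admissible realization lets the relay decode before $M/2$, the restriction changes nothing on the DMT-dominant event, and $d^{HDAF}(r)=d_{out}(r)$ is preserved.

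The substantive range is $0\le r\le\frac{1}{2}$, where the relay typically does decode before $M/2$ and the modification genuinely defers transmission to slot $M/2$. I would reuse the decomposition of the proof of Theorem \ref{theoremhybrid}: either the relay reliably decodes by slot $M/2$, through the bounded-distance rule of Theorem \ref{theroem achivable} with $\delta=\mu\log\rho$ and $\mu$ large so that its error contribution is exponentially negligible, and then forwards over the last $M/2$ slots; or it fails and MAF is used over those slots. On the first event the destination diversity equals $d_{M/2,D}(r)$, and since (\ref{eq.40}) yields $d_{m,D}(r)=2-r$ for every $m\le M/2$ in this range, one has $d_{M/2,D}(r)=2-r$ in particular; as the relay decodes by $M/2$ with probability $\doteq\rho^{0}$ here, this event contributes exponent $2-r$. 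On the second event the bound is exactly $P_{E,MAF}$ of (\ref{eq.hyb3})--(\ref{eq.hyb8}), whose exponent is strictly larger than $2-r$. Adding the two, $P_{E}\doteq\rho^{-(2-r)}$, and $d^{HDAF}(r)=2-r$ is achieved.

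The main obstacle is exactly this lowest range: one must confirm that trimming the relay's transmission to the final $M/2$ symbols, rather than the $M-\mathbf{m}$ symbols used by unrestricted DDF, does not erode the destination diversity. This reduces to checking $d_{M/2,D}(r)=2-r$, i.e. that the destination outage exponent with forwarding pinned to instant $M/2$ still meets the MARC upper bound $2-r$; the remaining pieces---the relay decoding-error term and the MAF-fallback exponent---transfer verbatim from Theorems \ref{theroem achivable} and \ref{theoremhybrid} and demand no fresh estimate.
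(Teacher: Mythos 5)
Your proposal is correct, and on the decisive range $0\le r\le\frac{1}{2}$ it is essentially the paper's own argument: both bound the probability that the relay decodes by slot $M/2$ trivially by $1$ (the paper says ``replace (\ref{eq.51}) with $1$''), observe from (\ref{eq.40}) that pinning the forwarding start to $f=\frac{1}{2}$ still yields destination exponent $d_{M/2,D}(r)=2-r$, and inherit the MAF-fallback bound (\ref{eq.hyb3})--(\ref{eq.hyb8}) verbatim from Theorem \ref{theoremhybrid} (one nit: (\ref{eq.hyb4}) gives exponent $\ge 2-r$, not strictly greater as you state, but $\ge$ is all you need). Where you genuinely diverge is $\frac{1}{2}<r\le\frac{2}{3}$: you argue the restriction is vacuous because relay decoding at $m\le M/2$ would require $\left(1-\min(\alpha_{1},\alpha_{2})\right)^{+}\ge \frac{rM}{m}\ge 2r>1$, an event of super-polynomially small probability once attention is restricted to $\alpha_i\ge 0$ --- a sound argument that is in fact consistent with the paper's own computation $d_{m,R}(r)=\infty$ for $r>\frac{m}{M}$ in (\ref{eq.30}). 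The paper instead treats this range by the same mechanism as the low-rate range: bound $P(\mathbf{m}=m)$ by $1$ for $m\le M/2$ and invoke the flatness $d_{m,D}(r)=3(1-r)$ for $f\le\frac{5r-2}{2(3r-1)}$ together with $\frac{5r-2}{2(3r-1)}\ge\frac{1}{2}$ on $\frac{1}{2}\le r\le\frac{2}{3}$. Your vacuity argument is sharper there, since it shows the modified and original protocols coincide up to DMT-irrelevant events and avoids any appeal to the structure of $d_{m,D}$; the paper's route buys uniformity, covering all of $0\le r\le\frac{2}{3}$ with a single mechanism. A benefit both versions share, worth making explicit: bounding the early-decision probability by $1$ decouples the relay-decision event from the destination error event, so the delicate joint averaging over the random codebook in Appendix \ref{appenddixlower} need not be redone for the modified protocol.
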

\begin{proof}
Note DDF protocol is  used only for $0\leq r\leq \frac{2}{3}$.
Replace (\ref{eq.51}) with $1$ to   upper bound $P(\mathbf{m}=m)$
for all $1\leq m \leq M/2$, i.e. relay decodes before $M/2$ time
slots. Following the remaining steps thereafter, we can obtain an
upper bound $d_{m,D}(r)$. Observe (\ref{eq.40})-(\ref{eq.42}),
$d_{m,D}(r)$ is already equal to optimal value $2-r$, $3(1-r)$ for
$0\leq f \leq \frac{1}{2}$, $0\leq f \leq \frac{5r-2}{2(3r-1)}$,
respectively. Since $\frac{5r-2}{2(3r-1)}\geq \frac{1}{2}$ for
$\frac{2}{3} \geq r \geq \frac{1}{2}$, decoding at $f\geq
\frac{1}{2}$ is good enough, we conclude that the modification does
not affect the DMT achieved by the  HDAF protocol.
\end{proof}


\section{Conclusion}  \label{sec.conclusion}
In this paper, we consider the design of cooperative protocol for
MARC consisting half-duplex nodes. We analyze the DMT of DDF and
proposed HDAD for MARC model with finite block length. The analysis
captures the practical issue that the relay may decode erroneously
even when the channel is not outage and the destination  have no
priori knowledge of the relay's decision moment.  Our results show
that  additional protocol overhead informing the destination about
the relay decision time is not necessary and the loss in DMT due to
decoding error at the relay can be overcome. The difficulty of
analysis comes from that we need to properly manipulate the events
related to not only channel realization(i.e outage) but also
codewords  when applying random coding schemes with finite block
length. We also propose HDAF protocol whcich shares both the
advantages of DDF at low-medium multiplexing gain and MAF at high
multiplexing gain region. It achieves higher(optimal) diversity gain
than DDF protocol, particularly  at high multiplexing gain and at
low multiplexing gain when the number of time slots $M$ is small.
HDAF also outperforms MAF at low-medium multiplexing gain region
when $M$ is reasonable large enough. Finally, we have investigated
the variant of HDAF with reduced complexity by allowing the relay to
switch to transmission mode only after half of codeword and prove
there is no entailing loss in terms of DMT compared to the original
HDAF.

\appendices
 \section{Proof of Theorem \ref{theorem upperbound}}\label{appendixupper}
 Since $d_{m,R}(r)$ is solely a function of the source-relay links
 and $d_{m,D}(r)$ is a function of relay-destination and
 source-destination links. These two terms can be analyzed
 separately.
 \subsection{Analysis of ${d}_{m,R}(r)$} \label{anlysisdmr}
 We partition the event $\{\mathbf{m}=m\}$ into the set of events
 $A_{I}^{m}$, i.e.$\{\mathbf{m}=m\}=\bigcup\limits_{I}A_{I}^{m}$, where $I$ denotes any nonempty subset of
 $\{1,2\}$.  For notation convenience, we make two extra definitions that are only used in  Appendix \ref{appendixupper} and Appendix  \ref{appenddixlower},  $\overline{\mathcal{O}_{I,R}^{0}}=\phi$, ${\mathcal{O}_{R}^{M}}=\phi$ where $\phi$ denotes the empty set.
Then for $m=1,...,M$, $A_{I}^{m}$ are defined as follows
\begin{equation}
A^{m}_{1}=\left\{ (h_{1}, h_{2}): \mathcal{O}_{1,R}^{m-1}\bigcap \overline{\mathcal{O}_{R}^{m}}
 \right\} \label{eq.15}
\end{equation}

%
 Event $A^{m}_{2}$ is similarly defined with $1$ replaced with $2$.
\begin{equation}
A^{m}_{(1,2)}=\left\{ (h_{1}, h_{2}):
\mathcal{O}_{(1,2),R}^{m-1}\bigcap \overline{\mathcal{O}_{R}^{m}}
 \right\} \label{eq.18}
\end{equation}
$A_{I}^{m}$ denotes the event that source message can be decoded at
slot $m$ and the outage event $\mathcal{O}_{I,R}^{m-1}$ occurs.
%
Note
\begin{equation}
P(A^{m}_{I^{*}})\leq
P(\textbf{m}=m)=P(\bigcup\limits_{I}{A^{m}_I})\leq\sum_{I}P(A^{m}_{I})\doteq
P(A^{m}_{I^{*}})  \label{eq.inqdmtmac}
 \end{equation}
where
$I^{*}=\arg\min\limits_{I}\lim\limits_{\rho'\rightarrow\infty}\frac{-\log{P(A^{m}_{I})}}{\log
\rho'}$. Thus $P(\textbf{m}=m)\doteq P(A^{m}_{I^{*}})$. Since
$|h_{i}|^2$ are exponentially distributed and $\rho\doteq \rho'$, we
compute $P(A_{I}^{m})\doteq\rho^{d_{A_{I}^{m}}(r)}$ as follows:

\subsubsection{Computation of ${d}_{A^{m}_{1}}(r)$, for $M-1 \geq m \geq 2$ }

For $r \geq 0$, $M-1 \geq m \geq 2$
\begin{align}
P({A^{m}_{1}})
=&
\begin{cases}
r>\frac{m}{M}, \;\; d_{A_{1}^{m}}(r)=\infty \\
r\leq\frac{m}{M}, \;\; d_{A_{1}^{m}}(r)=1-\frac{Mr}{2(m-1)}
\end{cases} \label{eq.22}
\end{align}
Due to the symmetry,  ${d}_{A^{m}_{2}}(r)={d}_{A^{m}_{1}}(r)$  is
clear.

\subsubsection{Computation of ${d}_{A^{m}_{1,2}}(r)$, for $M-1 \geq m \geq 2$}
%

\begin{align}
P({A^{m}_{1,2}})
=&
\begin{cases}
r>\frac{m}{M}, \;\; d_{A_{1,2}^{m}}(r)=\infty \\
\frac{m}{M} \geq r >\frac{m-1}{M},
\;\;d_{A_{1,2}^{m}}(r)=0\\
 \frac{m-1}{M} \geq r, \;\; d_{A_{1,2}^{m}}(r)=2\left(1-{\frac{Mr}{m-1}}\right) \\
\end{cases} \label{eq.23}
\end{align}
\subsubsection{Computation of  $d_{m=1,R}(r)$} 
%
%
\begin{align}
P({\textbf{m}=1})
=&
\begin{cases}
r>\frac{m}{M}, \;\; {d}_{m=1,R}(r)=\infty \\
\frac{m}{M}\geq r, \;\;d_{m=1,R}(r)=0
\end{cases} \label{eq.24}
\end{align}

\subsubsection{Computation of  ${d}_{m=M,R}(r)$} 
%
%
%

\begin{align}
P (A_{1}^{M})
=&
\begin{cases}
r>\frac{2(M-1)}{M}, \;\; d_{A_{1}^{M}}(r)=0 \\
\frac{2(M-1)}{M}\geq r,
\;\;d_{A_{1}^{M}}(r)={1-\frac{M\frac{r}{2}}{M-1}}
\end{cases} \label{eq.28}
\end{align}

$d_{A_{2}^{M}}(r)=d_{A_{1}^{M}}(r)$ is clear.

\begin{align}
P(A_{1,2}^{M}) 
=&
\begin{cases}
r>\frac{(M-1)}{M}, \;\; d_{A_{1,2}^{M}}(r)=0 \\
\frac{(M-1)}{M}\geq r,
\;\;d_{A_{1,2}^{M}}(r)=2\left({1-\frac{Mr}{M-1}}\right)
\end{cases} \label{eq.29}
\end{align}

Collecting the results (\ref{eq.22})-(\ref{eq.29}), overall, we
obtain

\begin{align}
{d}_{m,R}(r) =
\begin{cases}
0 \leq r< \frac{2(m-1)}{3M}, \;\; d_{m,R}(r)=1-\frac{Mr}{2(m-1)} \\
\frac{2(m-1)}{3M} \leq r< \frac{(m-1)}{M} , \;\; {d}_{m,R}(r)=2\left(1-\frac{Mr}{(m-1)}\right) \\
 \frac{(m-1)}{M} \leq r \leq \frac{m}{M},
\;\;{d}_{m,R}(r)=0 \\
 \frac{m}{M} < r \leq 1 ,
\;\;{d}_{m,R}(r)=\infty
\end{cases} \label{eq.30}
\end{align}

\subsection{Analysis of ${d}_{m,D}(r)$} \label{anlysisdmd}
Similar to (\ref{eq.inqdmtmac}), $P_{out}^{m}(r)\doteq
P_{\mathcal{O}_{I^{*},D}^{m}}(r)$
%
%
%
the DMT $d_{\mathcal{O}_{I,D}^{m}}(r)$ has been derived in
\cite{optimality_ARQ},  we provide  results here,  let
$f=\frac{m}{M}$,
\begin{equation}
d_{\mathcal{O}_{1,D}^{m}}(r)=
\begin{cases}
\frac{1}{2}>f \geq 0, \;\;\;\;\;\;\;\;\;\;  2-r \\
 1-\frac{r}{2}> f \geq \frac{1}{2} ,\;\;\; 2-\frac{r}{2(1-f)} \\
1\geq f \geq 1-\frac{r}{2},   \;\;\;\; \frac{(2-r)}{2f}
\end{cases}\label{eq.36}
\end{equation}

and

for $\frac{1}{3}>r\geq 0$,
\begin{equation}
d_{\mathcal{O}_{(1,2),D}^{m}}(r)=
\begin{cases}
\frac{2}{3}>f \geq 0 , \;\;\;\;\;\;\;\;\; 3(1-r)\\
1-r>f \geq\frac{2}{3} , \;\;\;   3-\frac{r}{1-f}\\
1\geq f \geq 1-r , \;\;\;\;\;  2\frac{1-r}{f}
\end{cases}\label{eq.37}
\end{equation}

for $1 \geq r \geq \frac{1}{3}$,

\begin{equation}
d_{\mathcal{O}_{(1,2),D}^{m}}(r)=
\begin{cases}
\frac{2}{3}>f\geq0  , \;\;\; 3(1-r) \\
1\geq f \geq \frac{2}{3}, \;\;\;\;  2\frac{1-r}{f}
\end{cases}\label{eq.38}
\end{equation}
Note
$d_{m,D}(r)=\min\left\{d_{\mathcal{O}_{(1,2),D}^{m}}(r),d_{\mathcal{O}_{1,D}^{m}}(r),d_{\mathcal{O}_{2,D}^{m}}(r)\right\}$.
By some manipulation, we obtain,
\\

 for $\frac{1}{2}>r \geq 0$,
\begin{equation}
d_{m,D}(r)=
\begin{cases}
\frac{1}{2}>f\geq0 , \;\;\;\;\;\;\;\;\;\;\;  2-r\\
1-\frac{r}{2}>f\geq \frac{1}{2} , \;\;\;\;  2-\frac{r}{2(1-f)}\\
1 \geq f\geq 1-\frac{r}{2}, \;\;\;\;   \frac{2-r}{2f}\\
\end{cases}\label{eq.40}
\end{equation}
 for $\frac{16}{25}>r \geq \frac{1}{2}$,
\begin{equation}
d_{m,D}(r)=
\begin{cases}
 \frac{5r-2}{2(3r-1)}>f\geq 0, \;\;\;\;\;\;\;\;\;  3(1-r)\\
1-\frac{r}{2}>f\geq \frac{5r-2}{2(3r-1)}, \;\;  2-\frac{r}{2(1-f)} \\
1 \geq f\geq 1-\frac{r}{2}  , \;\;\;\;\;\;\;\;\;\;\;\;  \frac{2-r}{2f}\\
\end{cases}\label{eq.41}
\end{equation}
 for $\frac{2}{3}>r \geq \frac{16}{25}$,
\begin{equation}
d_{m,D}(r)=
\begin{cases}
\frac{5r-2}{2(3r-1)}>f\geq 0, \;\;\;  3(1-r)\\
\frac{2-\frac{5r}{4}-\sqrt{r\left(\frac{25r}{16}-1\right)}}{2}>f\geq \frac{5r-2}{2(3r-1)}  , \;\;\;\;2-\frac{r}{2(1-f)}\\
\frac{2-\frac{5r}{4}+\sqrt{r\left(\frac{25r}{16}-1\right)}}{2}>f\geq \frac{2-\frac{5r}{4}-\sqrt{r\left(\frac{25r}{16}-1\right)}}{2} , \;\;\;\; \frac{2(1-r)}{f} \\
1-\frac{r}{2}>f\geq \frac{2-\frac{5r}{4}+\sqrt{r\left(\frac{25r}{16}-1\right)}}{2} , \;\;\;\;  2-\frac{r}{2(1-f)}\\
1 \geq f\geq 1-\frac{r}{2}  , \;\;\;\; \frac{2-r}{2f}\\
\end{cases}\label{eq.42}
\end{equation}
 for $1 \geq r \geq \frac{2}{3}$,
\begin{equation}
d_{m,D}(r)=
\begin{cases}
 \frac{2}{3}>f\geq 0, \;\;\; 3(1-r) \\
1\geq f\geq \frac{2}{3} , \;\;\; \frac{2(1-r)}{f} \\
\end{cases}\label{eq.43}
\end{equation}

 \section{PROOF OF THEOREM \ref{theroem achivable}}\label{appenddixlower}
We now turn to the analysis of the  term
$P(\mathbf{m}=m)P(E_{r}|\mathbf{m}=m)$ in (\ref{eq.48}). First, we
upper bound these two terms by extending the techniques in
\cite{coding_and_decoding_DDF} to our MARC model. Then , we average
the product of these two terms  over the Gaussian random ensemble

\subsection{Analysis of $P(\bold{m}=m)$ }
Let $\mathcal{U}_{m}=\bigcup_{w=1}^{\rho^{rMT}}\mathcal{S}_{m}(w)$,
where $w=(w_{1},w_{2})$. Averaged with respect to the random coding
ensemble, we may choose without loss of generality $w_{1}=1,w_{2}=1$
as the reference transmitted message. Let
$\Delta\bold{x}_{i,0}^{m}(w_{i})=\bold{x}_{i,0}^{m}(w_{i})-\bold{x}_{i,0}^{m}(1)$,
as \cite{coding_and_decoding_DDF}, with $h$ replaced by
$(h_{1},h_{2})$, for $1 \leq m \leq M$, we have the following
results 
\begin{equation}
\begin{split}
 P(\bold{m}=m)\leq & P\left(\{(h_{1},h_{2})\notin
\mathcal{O}_{R}^{m-1}\},\{\bold{y}_{r,0}^{m-1}\notin
\mathcal{U}_{m-1}\}\right) \\
+ & P\left(\{(h_{1},h_{2})\in
\mathcal{O}_{R}^{m-1}\},\{(h_{1},h_{2})\notin
\mathcal{O}_{R}^{m}\}\right)
\end{split}\label{eq.51}
\end{equation}
where the definition $\overline{\mathcal{O}_{R}^{0}}=\phi$,
${\mathcal{O}_{R}^{M}}=\phi$  are again used.

For $1 < m < M$,  given a  $(h_{1},h_{2})$ and a codebook,
\begin{equation}
\begin{split}
& P\left(\{\bold{y}_{r,0}^{m}\notin \mathcal{U}_{m}\}\right) \leq
(1+\delta)^{mT} e^{-mT\delta}
\\
&
+\sum_{w\neq(1,1)}I_{d}\left\{|h_{1}\Delta\bold{x}_{1,0}^{m}(w_{1})+h_{2}\Delta\bold{x}_{2,0}^{m}(w_{2})|^{2}
\leq 4mT(1+\delta)\sigma_{n}^{2}\right\}
\end{split}\label{eq.53}
\end{equation}
where $I_{d}(\;)$ is the indicator function. Different from
\cite{coding_and_decoding_DDF}, we do \textsl{ not} average over the
random coding ensemble here from (\ref{eq.53}). Instead,  we  use
the following inequality
\begin{equation}
\begin{split}
I_{d}\left(|h_{1}\Delta\bold{x}_{1,0}^{m}(w_{1})+h_{2}\Delta\bold{x}_{2,0}^{m}(w_{2})|^{2}\leq
4mT(1+\delta)\sigma_{n}^{2}\right) \\
\leq
e^{1}exp\left\{-\frac{|h_{1}\Delta\bold{x}_{1,0}^{m}(w_{1})+h_{2}\Delta\bold{x}_{2,0}^{m}(w_{2})|^{2}}{4mT(1+\delta)\sigma_{n}^{2}}\right\}
\end{split}\label{eq.54}
\end{equation}
Then we turn to upper bound the term
$P(E|\overline{E}_{r},\bold{m}=m)$
\subsection{ Analysis of $P(E|\overline{E}_{r},\bold{m}=m)$}
We consider the GLRT decoder at the destination. This decoder has no
knowledge of decision time $\bold{m}$. Thus the receiver has to
decode both the decision time and source information message.  The
$P(E|\overline{E}_{r},\bold{m}=m)$ is upper bounded by
\begin{equation}
\begin{split}
P(E|\overline{E}_{r},\bold{m}=m)& \leq P(\{(g_{1}, g_{2}, g_{r}) \in
O_{D}^{m}\})\\
& + P(E, \{(g_{1}, g_{2}, g_{r}) \notin
O_{D}^{m}\}|\overline{E}_{r},\bold{m}=m)
\end{split}
 \label{eq.extrag}
\end{equation}
 and similar to \cite{coding_and_decoding_DDF}, we have
\begin{equation}
\begin{split}
&P(\{(1,1)\rightarrow \tilde{w}\}|\overline{E}_{r}, \bold{m}=m)\\
&\leq \sum_{m'=1}^{M} P\left(p(\mathbf{y}^{M}_{d,0}|1,m)\leq
p(\mathbf{y}^{M}_{d,0}|\tilde{w},m')|\overline{E}_{r},
\bold{m}=m\right)
\end{split}\label{eq.55}
\end{equation}
and for $m'\geq m$, (the case for $m'<m$ can be derived by
interchanging the role of  $m'$ and $m$, thus omitted here ), the
term inside the sum (\ref{eq.55}) can be upper bounded by
\begin{equation}
P\left(p(\mathbf{y}^{M}_{d,0}|1,m)\leq
p(\mathbf{y}^{M}_{d,0}|\tilde{w},m')|\overline{E}_{r},
\bold{m}=m\right) \leq  e^{-|\bold{z}_{m'}|^{2}/(4\sigma_{v}^{2})}
\label{eq.55_1}
\end{equation}
\[
\bold{z}_{m'}\triangleq
\begin{bmatrix}                
  g_{1}\Delta\bold{x}_{1,0}^{m}(\tilde{w}_{1})  + g_{2}\Delta\bold{x}_{2,0}^{m}(\tilde{w}_{2})\\
  g_{1}\Delta\bold{x}_{1,m}^{m'}(\tilde{w}_{1})  + g_{2}\Delta\bold{x}_{2,m}^{m'}(\tilde{w}_{2})+ g_{r}\bold{x}_{r,m}^{m'}(1,1) \\
 g_{1}\Delta\bold{x}_{1,m'}^{M}(\tilde{w}_{1})  + g_{2}\Delta\bold{x}_{2,m'}^{M}(\tilde{w}_{2})+ g_{r}\Delta\bold{x}_{r,m'}^{M}(\tilde{w})
\end{bmatrix}
\]
\subsection{Averaged over the Gaussian random ensemble} \label{sec.avg}
We are ready to average the term $
P(\bold{m}=m)P(E|\overline{E}_{r},\bold{m}=m)$ over the Gaussian
random ensemble. From (\ref{eq.51}),(\ref{eq.extrag}), for $1\leq m
\leq M$, it can be
upper bounded by sum of the four terms discussed below, we will show that the four terms have exponents
larger than or equal to  $d_{m,D}(r)+d_{m,R}(r)$ : \\
\emph{(1)} {$P(\{(g_{1}, g_{2}, g_{r}) \in O_{D}^{m}\})
P\left(\{(h_{1},h_{2})\in
\mathcal{O}_{R}^{m-1}\},\{(h_{1},h_{2})\notin
\mathcal{O}_{R}^{m}\}\right)$}:

It is clear this term has the exponent of $d_{m,D}(r)+d_{m,R}(r)$.
\\  \emph{(2)} {$P(E, \{(g_{1}, g_{2},
g_{r}) \notin O_{D}^{m}\}|\overline{E}_{r},\bold{m}=m)
P\left(\{(h_{1},h_{2})\notin \mathcal{O}_{R}^{m-1}\},\right.$\\
$ \left.  \;\;\; \{\bold{y}_{r,0}^{m-1}\notin
\mathcal{U}_{m-1}\}\right)$}: This term is more involved.  For the
case $m=1$, the term \emph{(2)} becomes to zero. For $1<m \leq M$,
given $h_{i}$, $g_{i}$, from (\ref{eq.51}), (\ref{eq.54}),
(\ref{eq.55}), (\ref{eq.55_1}),  we have
\begin{align}
&P\left(\{\bold{y}_{r,0}^{m-1}\notin \mathcal{U}_{m-1}\}\right)\cdot
P(\{(1,1)\rightarrow \tilde{w}\}|\overline{E}_{r},
\bold{m}=m)  \label{eq.57_1}\\
& \leq
\left[\sum_{w\neq(1,1)}e^{1}exp\left\{-\frac{|h_{1}\Delta\bold{x}_{1,0}^{m-1}(w_{1})+h_{2}\Delta\bold{x}_{2,0}^{m-1}(w_{2})|^{2}}{\left(4(m-1)T(1+\delta)\sigma_{n}^{2}\right)}\right\}\right. \notag\\
&\left.\;\;\;\;\;+(1+\delta)^{(m-1)T }e^{-(m-1)T\delta}\right]\cdot
\sum_{m'=1}^{M}
e^{-|\bold{z}_{m'}|^{2}/(4\sigma_{v}^{2})} \label{eq.57_2} \\
& \dot{\leq}
\sum_{w\neq(1,1)}e^{1}exp\left\{-\frac{|h_{1}\Delta\bold{x}_{1,0}^{m-1}(w_{1})+h_{2}\Delta\bold{x}_{2,0}^{m-1}(w_{2})|^{2}}{\left(4(m-1)T(1+\delta)\sigma_{n}^{2}\right)}\right\}
\notag \\
&\cdot \sum_{m'=1}^{M} e^{-|\bold{z}_{m'}|^{2}/(4\sigma_{v}^{2})}\label{eq.57_3} \\
& \dot{\leq}
\sum_{w\neq(1,1)}exp\left\{-\frac{|h_{1}\Delta\bold{x}_{1,0}^{m-1}(w_{1})+h_{2}\Delta\bold{x}_{2,0}^{m-1}(w_{2})|^{2}}{\left(4(m-1)T(1+\delta)\sigma_{n}^{2}\right)}\right\}
\notag \\
&\cdot \sum_{m'=1}^{M}
e^{-|\bold{z}_{m'}|^{2}/\left(4(m-1)T(1+\delta)\sigma_{v}^{2}\right)} \label{eq.57_4} \\
&=\sum_{w\neq\{(1,1),\tilde{w}\}}
exp\left\{-\frac{|h_{1}\Delta\bold{x}_{1,0}^{m-1}(w_{1})+h_{2}\Delta\bold{x}_{2,0}^{m-1}(w_{2})|^{2}}{\left(4(m-1)T(1+\delta)\sigma_{n}^{2}\right)}\right\} \notag \\
& \cdot \sum_{m'=1}^{M}
e^{-|\bold{z}_{m'}|^{2}/\left(4(m-1)T(1+\delta)\sigma_{v}^{2}\right)}
+\sum_{m'=1}^{M}
e^{-|\bold{z}'_{m'}|^{2}/\left(4(m-1)T(1+\delta)\sigma_{v}^{2}\right)}
 \label{eq.57_5}
\end{align}
where
\[ \bold{z}'^{T}_{m'}\triangleq
\begin{bmatrix}                
 \frac{\sigma_{v}}{\sigma_{n}} \left(h_{1}\Delta\bold{x}_{1,0}^{m-1}(\tilde{w}_{1})  +
  h_{2}\Delta\bold{x}_{2,0}^{m-1}(\tilde{w}_{2})\right) &
   \bold{z}_{m'}
\end{bmatrix}
\]
 Note by separating the term of $\tilde{w}$ from the first sum in (\ref{eq.57_5}),
 then  the product of the first two sum in (\ref{eq.57_5}) can now be averaged separately.
 We first consider the single term
 \begin{equation}
 e^{-|\bold{z}'_{m'}|^{2}/\left(4(m-1)T(1+\delta)\sigma_{v}^{2}\right)}
 \label{eq.57_6}
 \end{equation}
 Let
 \[
 \bold{X}_{i}=[\Delta x_{1,i}(\tilde{w}_{1}), \Delta x_{2,i}(\tilde{w}_{2}), x_{r,i}(1,1),\Delta
 x_{r,i}(\tilde{w})]^{T}, \;\;\;\; 1\leq i \leq MT
 \]
 and
 \[
  \bold{X}=[\bold{X}_{1}^{T},\bold{X}_{2}^{T},...\bold{X}_{MT}^{T}]^{T}
 \]
It can be verified that
$|\bold{z}'_{m}|^{2}=\bold{X}^{H}\bold{R}\bold{X}$, where $\bold{R}$
is a block diagonal matrix of the form
$\bold{R}=diag\left(\bold{R}_{1},...\bold{R}_{MT}\right)$. \\ For $1
\leq i \leq (m-1)T$, \\
\[
\bold{R}_{i} =\bold{C}^{H}\bold{C}
\]
where $h_{i}'=\frac{\sigma_{v}^{2}}{\sigma_{n}^{2}}h_{i}$ and
\[
\bold{C}=
\begin{bmatrix}
  h_{1}'  & h_{2}' &  0& 0\\
  g_{1}  &  g_{2}  & 0 & 0\\
  0 &0&0&0 \\
  0 &0&0&0
\end{bmatrix}
\]
 \\For
$(m-1)T+1 \leq i \leq mT$
\[
\bold{R}_{i}=\bold{g}_{a}^{H}*\bold{g}_{a}\;,
\;\;\;\;\;\;\bold{g}_{a}=[g_{1},g_{2},0, 0]
\]
For $mT+1 \leq i \leq m'T$
\[
\bold{R}_{i}=\bold{g}_{b}^{H}*\bold{g}_{b}\;, \;\;\;\;\;\;
\bold{g}_{b}=[g_{1},g_{2},g_{r}, 0]
\]
For $m'T+1 \leq i \leq MT$
\[
\bold{R}_{i}=\bold{g}_{c}^{H}*\bold{g}_{c}\;, \;\;\;\;\;\;
\bold{g}_{c}=[g_{1},g_{2}, 0,g_{r}]
\]
Let $\bold{R}^{s}=diag(\bold{R}_{1}^{s},...,\bold{R}_{MT}^{s})$,
$\bold{D}=diag(\bold{D}_{1},...,\bold{D}_{MT})$ where
$\bold{R}_{i}^{s}$ denotes $\bold{R}_{i}$ with the $s$-th row and
the $s$-th column replaced by zero vector.(when $s=0$, nothing is
changed). $\bold{D}_{i}=diag(2P,2P,P,2P)$. $\mathbf{C}^{s}$ is
similarly defined. Averaging (\ref{eq.57_6}) over the Gaussian
random ensemble, we have
\begin{equation}
\frac{1}{\det\left(\bold{I}+\frac{\bold{D}\bold{R}^{s}}{\left(4(m-1)T(1+\delta)\sigma_{v}^{2}\right)}
\right)} \label{eq.58}
\end{equation}
Let $k=\frac{1}{\left((m-1)T(1+\delta)\right)}$, we have\\
\\
\emph{(I )} For $\{\tilde{w}_{1}\neq 1, \tilde{w}_{2}\neq 1\}$,
$s=0$ ,
\begin{align}
\frac{1}{\det\left(\bold{I}+\frac{k\bold{D}\bold{R}^{s}}{4}\right)}
= & \frac{1}{\det\left(\bold{I}+\frac{k\rho\bold{C}^{H}\bold{C}}{2} \right)^{(m-1)T}} \notag\\
 \cdot &
 \frac{1}{\left[1+\frac{k\rho(|g_{1}|^{2}+|g_{2}|^{2})}{2}\right]^{T}} \notag \\
  \cdot &
  \frac{1}{\left[1+\frac{k\rho(2|g_{1}|^{2}+2|g_{2}|^{2}+|g_{r}|^{2})}{4}\right]^{(m'-m)T}} \notag \\
   \cdot &
   \frac{1}{\left[1+\frac{k\rho(|g_{1}|^{2}+|g_{2}|^{2}+|g_{r}|^{2})}{2}\right]^{(M-m')T}}
   \notag \\
  \dot{ \leq} &
    \frac{1}{\left[1+\rho(|h_{1}|^{2}+|h_{2}|^{2}+|g_{1}|^{2}+|g_{2}|^{2})\right]^{(m-1)T}}  \notag \\
 \cdot & \frac{1}{\left[1+\rho(|g_{1}|^{2}+|g_{2}|^{2})\right]^{T}} \notag \\
   \cdot &
   \frac{1}{\left[1+\rho(|g_{1}|^{2}+|g_{2}|^{2}+|g_{r}|^{2})\right]^{(M-m)T}}
\label{eq.59}
\end{align}
Notice that (\ref{eq.59}) does not depend on $m'$ and follows from
the inequality
\[
\det\left(\bold{I}+\frac{k\rho\bold{C}^{H}\bold{C}}{2} \right) \geq
\left[1+\frac{k\rho}{2}(|h_{1}|^{2}+|h_{2}|^{2}+|g_{1}|^{2}+|g_{2}|^{2})\right]
\]
Define $|h_{i}|^{2}=\rho^{-\alpha_{i}}$,
$|g_{i}|^{2}=\rho^{-\beta_{i}}$. Use the union bound for
$\mathcal{\overline{B}}_{m} $, summing over all $m'=1,...,M$ and
over all messages $\{ \tilde{w}_{1}\neq 1, \tilde{w}_{2}\neq 1 \}$,
and  average  over the channel realizations where $\{(h_{1},h_{2})
\notin \mathcal{O}_{R}^{m-1}\}$, $\{(g_{1},g_{2}, g_{r})\notin
\mathcal{O}_{D}^{m}\}$,  use the techniques developed in
\cite{on_the_achievableDMT}, we obtain the exponent of this
correlated term as
\begin{align}
 d_{m,cor}^{s=0}(r)&=\inf_{\substack{{\alpha_{i},\beta_{i}\geq
0},\\{\overline{\mathcal{O}_{R}^{m-1}} \bigcap
\overline{\mathcal{O}_{D}^{m}}}}}\left\{
\alpha_{1}+\alpha_{2}+\beta_{1}+\beta_{2}+\beta_{r}+Tf_{m}^{s=0}(\alpha_{i},
\beta_{i}, r)\right\} \notag\\
& \geq \inf_{\substack{{\alpha_{i},\beta_{i}\geq 0},\\
\overline{\mathcal{O}_{R}^{m-1}}}}\left\{
\alpha_{1}+\alpha_{2}+\beta_{1}+\beta_{2}+\beta_{r}+Tf_{m}^{s=0}(\alpha_{i},
\beta_{i}, r)\right\} \label{eq.63_0}\\
& \geq \inf_{\substack{{\alpha_{i},\beta_{i}\geq 0},\\
\overline{\mathcal{O}_{R}^{m-1}}}}\left\{
\alpha_{1}+\alpha_{2}+Tf_{m}^{s=0}(\alpha_{i},
\beta_{i}, r)\right\} \label{eq.63_1}\\
&+ \inf_{\substack{{\alpha_{i},\beta_{i}\geq 0},\\
\overline{\mathcal{O}_{R}^{m-1}}}}\left\{
\beta_{1}+\beta_{2}+\beta_{r}+Tf_{m}^{s=0}(\alpha_{i},
\beta_{i}, r)\right\} \label{eq.63_2} \\
& \geq d_{m,R}(r)+d_{m,D}(r) \label{eq.63_3}
\end{align}
where \begin{equation}
\begin{split}
&f_{m}^{s=0}(\alpha_{i},\beta_{i},r)\\
& = (m-1) \max {[(1-\alpha_{1})^{+}
,(1-\alpha_{2})^{+},(1-\beta_{1})^{+},
(1-\beta_{2})^{+}]} \\
& +\max[(1-\beta_{1})^{+}, (1-\beta_{2})^{+}] \\
& +(M-m) \max {[(1-\beta_{1})^{+}, (1-\beta_{2})^{+},
(1-\beta{r})^{+}]-rM } \label{eq.62}
\end{split}
\end{equation}
(\ref{eq.63_1}), (\ref{eq.63_2}) follows form the fact that the
infimum of (\ref{eq.63_0}) occurs where $f_{m}^{s=0}(\alpha_{i},
\beta_{i}, r)\rightarrow 0$ if $T$ is large enough(i.e. $T\geq4$).
Notice the event
\begin{equation}
\left\{ (\alpha_{i}): (m-1) \max {[(1-\alpha_{1})^{+}
,(1-\alpha_{2})^{+}]}-rM>0 \right\} \in
\overline{\mathcal{O}_{R}^{m-1}}
\end{equation}
It can be checked that  $(\ref{eq.63_1})\geq d_{m,R}(r)$, and
$(\ref{eq.63_2}) \geq d_{m,D}(r)$. \\
\emph{(II)} For $\{\tilde{w}_{1}\neq 1, \tilde{w}_{2}= 1\}$, $s=2$ ,

\begin{align}
\frac{1}{\det\left(\bold{I}+\frac{k\bold{D}\bold{R}^{s}}{4}\right)}
 \doteq &
    \frac{1}{\left[1+\rho(|h_{1}|^{2}+|g_{1}|^{2})\right]^{(m-1)T}}  \notag \\
 \cdot & \frac{1}{\left[1+\rho|g_{1}|^{2}\right]^{T}} \notag \\
   \cdot &
   \frac{1}{\left[1+\rho(|g_{1}|^{2}+|g_{r}|^{2})\right]^{(M-m)T}}
\label{eq.2_59}
\end{align}
Apply  similar arguments, we obtain the exponent of this correlated
term as
\begin{align}
 d_{m,cor}^{s=2}(r)&=\inf_{\substack{{\alpha_{i},\beta_{i}\geq
0},\\{\overline{\mathcal{O}_{R}^{m-1}} \bigcap
\overline{\mathcal{O}_{D}^{m}}}}}\left\{
\alpha_{1}+\beta_{1}+\beta_{r}+Tf_{m}^{s=2}(\alpha_{i},
\beta_{i}, r)\right\} \notag\\
& \geq \inf_{\substack{{\alpha_{i},\beta_{i}\geq 0},\\
\overline{\mathcal{O}_{R}^{m-1}}}}\left\{
\alpha_{1}+\beta_{1}+\beta_{r}+Tf_{m}^{s=2}(\alpha_{i},
\beta_{i}, r)\right\} \label{eq.2_63_0}\\
& \geq \inf_{\substack{{\alpha_{i},\beta_{i}\geq 0},\\
\overline{\mathcal{O}_{R}^{m-1}}}}\left\{
\alpha_{1}+Tf_{m}^{s=2}(\alpha_{i},
\beta_{i}, r)\right\} \label{eq.2_63_1}\\
&+ \inf_{\substack{{\alpha_{i},\beta_{i}\geq 0},\\
\overline{\mathcal{O}_{R}^{m-1}}}}\left\{
\beta_{1}+\beta_{r}+Tf_{m}^{s=2}(\alpha_{i},
\beta_{i}, r)\right\} \label{eq.2_63_2} \\
& \geq d_{m,R}(r)+d_{m,D}(r) \label{eq.2_63_3}
\end{align}
where \begin{equation}
\begin{split}
f_{m}^{s=2}(\alpha_{i},\beta_{i},r)& = (m-1) \max {[(1-\alpha_{1})^{+} ,(1-\beta_{1})^{+}]} \\
& +\max(1-\beta_{1})^{+} \\
& +(M-m) \max {[(1-\beta_{1})^{+}, (1-\beta{r})^{+}]-\frac{r}{2}M }
\label{eq.2_62}
\end{split}
\end{equation}
Note
\begin{equation}
\left\{ \alpha_{1}: (m-1) \max {(1-\alpha_{1})^{+}}-\frac{r}{2}M>0
\right\} \in \overline{\mathcal{O}_{R}^{m-1}}
\end{equation}
The case for  $\{\tilde{w}_{1}= 1, \tilde{w}_{2}\neq 1\}$, $s=1$ is
similar and omitted here. Thus this correlated term can be ignored
with respect  to the term \emph{(1)} in DMT analysis.

For averaging the first sum in (\ref{eq.57_5}),  define
\[
\bold{H}'^{s}_{i}=\begin{cases} \bold{h}'^{H}\bold{h}'  \;\;\; 1 \leq i \leq (m-1)T \\
\textbf{0} \;\;\;\;\;\;\;\;\;\; otherwise
\end{cases}
\]
where
\[
\bold{h}'=\begin{bmatrix} h'_{1} & h'_{2} &0 & 0
\end{bmatrix}
\]
and $s$ is chosen according to $w$ as before, we have
\begin{equation}
\left(\sum_{w\neq\{(1,1),\tilde{w}\}}\frac{1}{\det\left(\bold{I}+\frac{k\bold{D}\bold{H}^{s}}{4}\right)}\right)
\label{eq.64}
\end{equation}
Averaged by $\{(h_{1},h_{2}) \notin \mathcal{O}_{R}^{m-1}\}$,
$\{(g_{1},g_{2}, g_{r})\notin \mathcal{O}_{D}^{m}\}$, the exponent
$d_{m,(h_{i})}(r)$ can be derived.

To average the second sum in (\ref{eq.57_5}), let
$\bold{Q}^{s}=\bold{R}^{s}$, except for $1 \leq i \leq (m-1)T$,
$\bold{Q}^{s}_{i}= \bold{R}^{s}_{mT}$, use the union bound over
$\tilde{w}$, by averaging the term
\begin{equation}
\left(\sum_{\tilde{w}\neq(1,1)}\frac{1}{\det\left(\bold{I}+\frac{k\bold{D}\bold{Q}^{s}}{4}\right)}\right)
\label{eq.65}
\end{equation}
over  $\{(h_{1},h_{2}) \notin \mathcal{O}_{R}^{m-1}\}$,
$\{(g_{1},g_{2}, g_{r})\notin \mathcal{O}_{D}^{m}\}$, we obtain the
exponent $d_{m,(g_{i})}(r)$.  Analogous to Appendix
\ref{anlysisdmr}, Appendix \ref{anlysisdmd}, it can be shown that
$d_{m,(h_{i})}(r)\dot{\geq} d_{m,R}(r)$, $d_{m,(g_{i})}(r)\doteq
d_{m,D}(r)$. Collecting all the results above, the term \emph{(2)}
can be
ignored compared to the term \emph{(1)} in DMT analysis. \\
\emph{(3)} {$P(\{(g_{1}, g_{2}, g_{r}) \in O_{D}^{m}\})
P\left(\{(h_{1},h_{2})\notin
\mathcal{O}_{R}^{m-1}\},\{\bold{y}_{r,0}^{m-1}\notin
\mathcal{U}_{m-1}\}\right)$}: These two terms can be averaged over
the Gaussian random ensemble separately. It is clear the first term
has the exponent of $d_{m,D}(r)$, and the second has
$d_{m,(h_{i})}(r)$.
\\
\emph{(4)} {$P(E, \{(g_{1}, g_{2}, g_{r}) \notin
O_{D}^{m}\}|\overline{E}_{r},\bold{m}=m) P\left(\{(h_{1},h_{2})\in
\mathcal{O}_{R}^{m-1}\} \right. ,\\ \left.
\;\;\;\{(h_{1},h_{2})\notin \mathcal{O}_{R}^{m}\}\right) $}: These
two terms can be averaged over the Gaussian random ensemble
separately. It is clear the first term has the exponent of
$d_{m,(g_{i})}(r)$, and the second has $d_{m,R}(r)$.
\section{Proof of Theorem \ref{theoremhybrid} }\label{appendixhybrib}
To upper bound the second term on the right hand side of
(\ref{eq.hyb3}), we further partition the error event into two error
events depending on the MAF's outage event $O_{MAF}$ of
$\alpha_{i},\beta_{i}$, where the MAF is being used,
\begin{equation}
\begin{split}
&P\left(E,\left\{\{(h_{1},h_{2})\notin
\mathcal{O}_{R}^{\frac{M}{2}}\},\{\bold{y}_{r,0}^{\frac{M}{2}}\notin
\mathcal{U}_{\frac{M}{2}}\}\right\}\right)= \\
&P\left(E,O_{MAF},\left\{\{(h_{1},h_{2})\notin
\mathcal{O}_{R}^{\frac{M}{2}}\},\{\bold{y}_{r,0}^{\frac{M}{2}}\notin
\mathcal{U}_{\frac{M}{2}}\}\right\}\right)+\\
&P\left(E,\overline{O}_{MAF},\left\{\{(h_{1},h_{2})\notin
\mathcal{O}_{R}^{\frac{M}{2}}\},\{\bold{y}_{r,0}^{\frac{M}{2}}\notin
\mathcal{U}_{\frac{M}{2}}\}\right\}\right)
\end{split}
\label{eq.hyb6}
\end{equation}
Notice we do not apply the standard DMT analysis \cite{diva} here by
simply using $P\left(O_{MAF}\right)$ to upper bound the first term
on the right hand side of (\ref{eq.hyb6}) (denoted as $P^{1st}(\xi)$
for notation ease), otherwise the lower bound of DMT  we obtain for
(\ref{eq.hyb6}) will be the same as  MAF protocol, only
$2-\frac{3r}{2}$, not $2-r$.

For the the analysis of  $P^{1st}_{\xi}$,  note the fact \[ P\left(E
\left |(\alpha_{i},\beta_{i}) \in
O_{MAF},\left\{\{(h_{1},h_{2})\notin
\mathcal{O}_{R}^{\frac{M}{2}}\},\{\bold{y}_{r,0}^{\frac{M}{2}}\notin
\mathcal{U}_{\frac{M}{2}}\}\right\}\right.\right)\leq 1
\]
we have
\[
\begin{split}
&P^{1st}\left({\xi|(\alpha_{i},\beta_{i}) \in O_{MAF}}\right)\\
&\leq P\left( \left.\left\{\{(h_{1},h_{2})\notin
\mathcal{O}_{R}^{\frac{M}{2}}\},\{\bold{y}_{r,0}^{\frac{M}{2}}\notin
\mathcal{U}_{\frac{M}{2}}\}\right\}\right|(\alpha_{i},\beta_{i}) \in
O_{MAF}\right)
\end{split}
\]
Recall (\ref{eq.53}), (\ref{eq.54}),  by averaging over the Gaussian
random codebook, we have
\begin{equation}
\begin{split}
&P^{1st}\left({\xi|(\alpha_{i},\beta_{i}) \in
O_{MAF}}\right)\\
&\dot{\leq}
\sum_{i=1}^{2}\rho^{-\frac{MT}{2}[(1-\alpha_{i})-\frac{r}{2}]}+\rho^{-\frac{MT}{2}[(1-\min(\alpha_{1},\alpha_{2})-r]}
\label{eq.hyb7}
\end{split}
\end{equation}
Finally, averaging the right hand side of (\ref{eq.hyb7}) over
$(\alpha_{i},\beta_{i}) \in
\left\{O_{MAF}\bigcap\left((\alpha_{1},\alpha_{2})\notin
\mathcal{O}_{R}^{\frac{M}{2}}\right)\right\}$, for $T$ is large
enough( but finite), the derivation of $P^{1st}(\xi)\doteq
\rho^{d_{P^{1st}_{\xi}}(r)}$ becomes identical to (\ref{eq.hyb4}),
(\ref{eq.hyb8}) and $d_{P^{1st}_{\xi}}(r)\geq 2-r$. Then we turn to
upper bound the second term on the right hand side of
(\ref{eq.hyb6}), denoted as $P^{2nd}({\xi})$. As (\ref{eq.hyb9}),
$P^{2nd}({\xi})\leq \sum_{I}P^{2nd}({\xi_{I}})$.
Use $P_{MAF}\left(E,\overline{O}_{MAF}\right)$, where MAF is always
used, to upper bound $P^{2nd}(\xi)$, we have
$P^{2nd}(\xi_{(1,2)})\geq 3-r$, however, only $P^{2nd}(\xi_{i})\geq
2-\frac{3r}{2}$ is obtained, $i=1,2$. To get a tighter upper bound
for $P^{2nd}(\xi_{1})$, (\ref{eq.53}), (\ref{eq.54}) are again used
and apply the techniques developed in Appendix \ref{sec.avg} to MAF
protocol, where the equivalent MAF model can be expressed as
\begin{equation}
\mathbf{y}=\begin{bmatrix}
  g_{1}\mathbf{I}_{MT/2}  & \mathbf{0} \\
  g_{2}h_{1}b\mathbf{I}_{MT/2}  & g_{1}\mathbf{I}_{MT/2}\\
\end{bmatrix}\mathbf{x}_{1}+
\begin{bmatrix}
   \mathbf{0} \\
  g_{2}b\mathbf{I}_{MT/2}\\
\end{bmatrix}\mathbf{n}+\mathbf{v} \label{eq.hybMAFmodel}
\end{equation}
and $b$ is chosen to be of exponential order zero \cite{case_MAC},
\cite{on_the_achievableDMT}, which satisfies the relay's
transmission power constraint.
 We have
\begin{equation}
\begin{split}
&P^{2nd}_{1}\left(\{(1,1)\rightarrow \tilde{w}\}\right)\dot{\leq}
e^{-|\mathbf{z}'|^{2}} + \\
& \sum_{w\neq\{(1,1),\tilde{w}\}}
exp\left\{-\frac{|h_{1}\Delta\bold{x}_{1,0}^{\frac{M}{2}}(w_{1})+h_{2}\Delta\bold{x}_{2,0}^{\frac{M}{2}}(w_{2})|^{2}}{\left(2MT(1+\delta)\sigma_{n}^{2}\right)}\right\}e^{-|\mathbf{z}|^{2}}
\end{split}
\label{eq.hyb12}
\end{equation}
where $P^{2nd}_{1}\left(\{(1,1)\rightarrow \tilde{w}\}\right)$
denotes the probability that $(1,1)$ is decoded as $\tilde{w}$ for
type-1 error in the event of the second term in (\ref{eq.hyb6})
given $h_{i},g_{i}$ and $\mathbf{z}$ and $\mathbf{z}'$ are redefined
in this section as
\begin{equation}
\mathbf{z}=\begin{bmatrix}
  \frac{g_{1}}{\sigma_{n}}\mathbf{I}_{MT/2}  & \mathbf{0} \\
  \frac{g_{2}h_{1}b}{\sqrt{\sigma_{n}^{2}+|g_{2}b|^{2}\sigma_{v}^{2}}}\mathbf{I}_{MT/2}  & \frac{g_{1}}{ \sqrt{\sigma_{n}^{2}+|g_{2}b|^{2}\sigma_{v}^{2}}}\mathbf{I}_{MT/2}\\
\end{bmatrix}
\Delta\mathbf{x}_{1}(\tilde{w}) \label{eq.hyb10}
\end{equation}
\begin{equation}
\mathbf{z}'=\begin{bmatrix}
\frac{h_{1}}{\sigma_{n}^{2}}\mathbf{I}_{MT/2}  & \mathbf{0} \\
  \frac{g_{1}}{\sigma_{n}^{2}}\mathbf{I}_{MT/2}  & \mathbf{0} \\
  \frac{g_{2}h_{1}b}{\sqrt{\sigma_{n}^{2}+|g_{2}b|^{2}\sigma_{v}^{2}}}\mathbf{I}_{MT/2}  & \frac{g_{1}}{\sqrt{\sigma_{n}^{2}+|g_{2}b|^{2}\sigma_{v}^{2}}}\mathbf{I}_{MT/2}\\
\end{bmatrix}
\Delta\mathbf{x}_{1}(\tilde{w}) \label{eq.hyb11}
\end{equation}
Apply the union bound and average them over Gaussian random ensemble
and corresponding channel  realizations. For the product term on the
right hand side of (\ref{eq.hyb12}), it can be averaged separately
over the Gaussian random ensemble, resulting in the exponent
$d_{product}(r)$ identical to (\ref{eq.hyb4}). For the first term on
the right hand side of (\ref{eq.hyb12}), we obtain its exponent
equal to
\begin{equation}
\inf_{(\alpha_{i},\beta_{i}) \in
\mathcal{C}}\left\{\alpha_{1}+\alpha_{2}+\beta_{1}+\beta_{2}+\beta_{r}+\frac{MT}{2}k(\alpha_{i},
\beta_{i},r)\right\}\geq2-r
\end{equation}
where
\[
k(\alpha_{i},
\beta_{i},r)=\max\left\{2(1-\beta_{1}),1-(\beta_{r}+\alpha_{1}),2-(\alpha_{1}+\beta_{1})\right\}-r
\]
and
 $\mathcal{C}= \left\{\overline{O}_{MAF}\bigcap\left((\alpha_{1},\alpha_{2})\notin
\mathcal{O}_{R}^{\frac{M}{2}}\right)\right\}$
\bibliographystyle{IEEEtran}
\bibliography{IEEEabrv,reference}

\end{document}